\newcommand{\comment}[1]{}
\newcommand{\dd}{{\mathrm{d}}}
\newcommand{\supp}{{\mathrm{supp}}}
\DeclareMathOperator*{\argmax}{arg\,max}
\begin{document}

\theoremstyle{plain}
\newtheorem{theorem}{Theorem}[section]
\newtheorem{lemma}[theorem]{Lemma}
\newtheorem{proposition}[theorem]{Proposition}
\newtheorem{corollary}[theorem]{Corollary}

\theoremstyle{definition}
\newtheorem{definition}[theorem]{Definition}
\newtheorem{asmp}[theorem]{Assumption}
\newtheorem{notn}[theorem]{Notation}
\newtheorem{problem}[theorem]{Problem}

\theoremstyle{remark}
\newtheorem{remark}[theorem]{Remark}
\newtheorem{example}[theorem]{Example}
\newtheorem{clm}[theorem]{Claim}
\newtheorem{assumption}[theorem]{Assumption}

\numberwithin{equation}{section}

\title[Universal portfolios in stochastic portfolio theory]{Universal portfolios\\in stochastic portfolio theory}
\keywords{Universal portfolio, stochastic portfolio theory, functionally generated portfolio, large deviation, nonparametric statistics.}

\author{Ting-Kam Leonard Wong}
\address{Department of Mathematics\\University of Southern California\\Los Angeles, CA 90089}
\email{tkleonardwong@gmail.com}


\date{\today}
\begin{abstract}
Consider a family of portfolio strategies with the aim of achieving the asymptotic growth rate of the best one. The idea behind Cover's universal portfolio is to build a wealth-weighted average which can be viewed as a buy-and-hold portfolio of portfolios. When an optimal portfolio exists, the wealth-weighted average converges to it by concentration of wealth. Working under a discrete time and pathwise setup, we show under suitable conditions that the distribution of wealth in the family satisfies a pathwise large deviation principle as time tends to infinity. Our main result extends Cover's portfolio to the nonparametric family of functionally generated portfolios in stochastic portfolio theory and establishes its asymptotic universality.
\end{abstract}

\maketitle

\section{Introduction} \label{sec:intro}
The problem of portfolio selection is to decide, at each point in time, the distribution of capital over the available assets in order to maximize future wealth. For portfolios without short sales, the distribution at time $t$ is given by a {\it portfolio vector} $\pi(t) = \left(\pi_1(t), \ldots, \pi_n(t)\right)$ whose components are non-negative and sum to $1$ (here $n \geq 2$ is the number of assets). Since Markowitz's seminal paper \cite{M52} there has been an explosive growth of literature on the theory and practice of portfolio selection. The mainstream approach, due to Markowitz, consists of two major steps. First we build and estimate a statistical model of the joint distribution of future asset returns (usually specified in terms of the first and second moments). Then, based on the investor's preference and risk aversion (described by a utility function), we compute the optimal portfolio weights. We refer the reader to \cite{CK06} for mathematical details as well as practical considerations.

The above approach depends on the investor's (unobservable) preferences and requires forecasts of returns and risks.  From the point of view of an investment firm which manages a strategy for many investors coming in and out, the classical consumption-based utility may not be appropriate. In the end, performance is what portfolio managers care most about. Moreover, it is well known that the optimal portfolio is highly sensitive to model (mis)specifications and estimation errors (see for example \cite{M89}, \cite{C93} and \cite{DGU09}). Can we construct good portfolios without assuming specific models of preferences and asset prices? In recent years two {\it model-free} approaches emerged which attempt to achieve this goal.

\subsection{Stochastic portfolio theory} \label{sec:spt}
Stochastic portfolio theory, first developed by Fernholz \cite{F02} and extended by Fernholz and Karatzas \cite{FKsurvey} and others, is a descriptive theory of equity market and portfolio selection. Instead of modeling preferences and market equilibrium, the theory constructs and analyzes portfolios using properties of observable market quantities. A major result is the existence of portfolio strategies (called {\it relative arbitrages}) that outperform the market portfolio under suitable conditions.

To explain this more precisely let us introduce some notations. In an equity market with $n$ stocks, let $X_i(t) > 0$ be the market capitalization of stock $i$ at time $t$. The {\it market weight} of stock $i$ is the ratio
\begin{equation} \label{eqn:marketweight}
\mu_i(t) = \frac{X_i(t)}{X_1(t) + \cdots + X_n(t)}.
\end{equation}
The market weights are the portfolio weights of the {\it market portfolio}. It is (under idealized assumptions) a buy-and-hold portfolio representing the overall performance of the market. Suppose we arrange the market weights in descending order:
\begin{equation} \label{eqn:capitaldistribution}
\mu_{(1)}(t) \geq \cdots \geq \mu_{(n)}(t).
\end{equation}
Here the $\mu_{(k)}(t)$'s are the reverse order statistics, and the vector $\left(\mu_{(1)}(t), \ldots, \mu_{(n)}(t)\right)$ of ranked market weights is called the {\it capital distribution} of the market. It was observed (see \cite[Chapter 4]{F02}) that despite price and economic fluctuations, the distribution of capital exhibits remarkable stability over long periods. In particular, the equity market has remained {\it diverse}: the maximum market weight $\max_{1 \leq i \leq n} \mu_i(t)$ has been bounded away from $1$. Moreover, the market appears to possess {\it sufficient volatility}: if one plots the cumulative realized volatility of $\mu(t)$, its slope is bounded below. If we assume that the market is diverse and sufficiently volatile,  trading is frictionless and the investor does not influence prices,  there exist portfolios that are guaranteed to outperform the market portfolio over sufficiently long horizons. For precise statements and their relationship with the classical notion of arbitrage, see \cite[Chapter 2]{FKsurvey}. Also see \cite{P16, FKR16} and their references for results concerning short term relative arbitrage. These relative arbitrages are constructed using {\it functionally generated portfolios} which are explicit deterministic functions of the current market weights given by gradients of concave functions. In \cite{PW14} and \cite{PW16} we established an elegant connection between functionally generated portfolio, convex analysis, optimal transport and information geometry. Intuitively, these portfolios work by capturing market volatility, and we showed in \cite{PW14} that functionally generated portfolios exhaust the class of volatility harvesting portfolio maps.

\subsection{Universal portfolio theory}
Universal portfolio theory is a very active field in mathematical finance and machine learning. Instead of giving an extensive review (which we refer the reader to the recent survey \cite{LH14}), let us explain the main ideas of Cover's classic paper \cite{C91} which started the subject. A portfolio of $n$ stocks is said to be {\it constant-weighted}, or {\it constantly rebalanced}, if the portfolio weights $\pi(t) \equiv \pi$ are constant over time. It has been observed empirically that a rebalanced portfolio frequently outperforms a buy-and-hold portfolio of the constituent stocks (see \cite{PW13} for a theoretical justification). Let $Z_{\pi}(t)$ be the wealth of the constant-weighted portfolio $\pi$ at time $t$ (with initial value $Z_{\pi}(0) = 1$), where $\pi$ ranges over the closed unit simplex
\[
\overline{\Delta}_n = \left\{p = (p_1, \ldots, p_n) \in [0, 1]^n: \sum_{i = 1}^n p_i = 1\right\}.
\]
Working with a discrete time market model, Cover asked the following question: Without any knowledge of future stock prices, is it possible to invest in such a way that the resulting wealth is close to
\[
Z^*(t) = \max_{\pi \in \overline{\Delta}_n} Z_{\pi}(t),
\]
the performance of the best constant-weighted portfolio chosen with hindsight? While this seems to be an unrealistically ambitious goal, Cover constructed a non-anticipative sequence of portfolio weights $\widehat{\pi}(t)$ such that the resulting wealth $\widehat{Z}(t)$ satisfies the {\it universality property}
\begin{equation} \label{eqn:universalityCover}
\frac{1}{t} \log \frac{\widehat{Z}(t)}{Z^*(t)} \geq \frac{C}{t^{(n-1)/2}} \rightarrow 0,
\end{equation}
where $C > 0$ is a constant, for {\it arbitrary} sequences of stock returns. Explicitly, Cover's {universal portfolio} is given by
\begin{equation}  \label{eqn:coverportfolio}
\widehat{\pi}(t) = \frac{\int_{\overline{\Delta}_n} \pi Z_{\pi}(t) d\pi}{\int_{\overline{\Delta}_n} Z_{\pi}(t) d\pi}.
\end{equation}
That is, $\widehat{\pi}(t)$ is the average of all constant-weighted portfolios weighted by their performances. In fact, it can be shown that
\begin{equation} \label{eqn:coverportfolio.identity}
\widehat{Z}(t) = \frac{\int_{\overline{\Delta}_n} Z_{\pi}(t) d\pi}{\int_{\overline{\Delta}_n} d\pi}.
\end{equation}
The representation \eqref{eqn:coverportfolio.identity} allows us to view Cover's portfolio as a buy-and-hold portfolio of all constant-weighted portfolios, where each portfolio receives the same infinitesimal wealth initially. Cover's result \eqref{eqn:universalityCover} states that the maximum and average of $V_{\pi}(t)$ over $\pi \in \overline{\Delta}_n$ have the same asymptotic growth rate, and can be viewed as a consequence of Laplace's method of integration and the fact that for constant-weighted portfolios the map $\pi \mapsto V_{\pi}(t)$ is essentially a multiple of a Gaussian density. While numerous alternative portfolio selection algorithms have been proposed for constant-weighted and other families of portfolios, the idea of forming a wealth-weighted average underlies many of these generalizations.

\subsection{Summary of main results} \label{sec:summary}
It is natural to ask whether functionally generated portfolios and Cover's universal portfolio are connected in some way (see \cite[Remark 11.7]{FKsurvey}). Recently, \cite{B14} showed that Cover's portfolio \eqref{eqn:coverportfolio} is functionally generated in a generalized sense. With hindsight, this result is not surprising since Cover's portfolio is a buy-and-hold portfolio of constant-weighted portfolios, and both buy-and-hold and constant-weighted portfolios are functionally generated \cite[Example 3.1.6]{F02}. Instead, it is more interesting to think of Cover's portfolio as a market portfolio where each constituent asset is the value process of a portfolio in a family. The capital distribution \eqref{eqn:capitaldistribution} then generalizes to the {\it distribution of wealth} over the portfolios, a measure-valued process. While the capital distribution of an equity market is typically stable and diverse, this is not true for the distribution of wealth over a typical family of portfolios. Quite the contrary, wealth often concentrates {\it exponentially} around an optimal portfolio, and under suitable conditions this can be quantified by a pathwise large deviation principle (LDP). Moreover, we show that Cover's portfolio \eqref{eqn:universalityCover} can be generalized to the nonparametric family of functionally generated portfolios which contains the constant-weighted portfolios.

In this paper we study the long term performance of various portfolios. To state the main results let us introduce informally some concepts. These as well as the assumptions will be stated precisely in Section \ref{sec:prelim}. We consider an idealized equity market with $n \geq 2$ non-dividend paying stocks in discrete time ($t = 0, 1, 2, \ldots$). The evolution of the market is modeled by a sequence $\{\mu(t)\}_{t = 0}^{\infty}$ of market weights with values in the open unit simplex $\Delta_n$. For technical reasons, we follow \cite{C91} and assume that there is a constants $M > 0$ such that $\frac{1}{M} \leq \frac{\mu_i(t + 1)}{\mu_i(t)} \leq M$ for all $i$ and $t$ ($M$ is unknown to the investor). Consider a family $\{\pi_{\theta}\}_{\theta \in \Theta}$ of portfolio maps, where $\Theta$ is a topological index set and each $\pi_{\theta}$ is a map from $\Delta_n$ to $\overline{\Delta}_n$. If the investor chooses the portfolio map $\pi_{\theta}$, the portfolio weight vector at time $t$ is given by $\pi_{\theta}(\mu(t))$ which depends only on $\mu(t)$. For convenience and following the tradition of stochastic portfolio theory, we measure the values of all portfolios relative to that of the market portfolio. Thus we define the {\it relative value} $V_{\theta}(t)$ of the self-financing portfolio $\pi_{\theta}$ by
\begin{equation} 
V_{\theta}(0) = 1, \quad V_{\theta}(t + 1) = V_{\theta}(t) \sum_{i = 1}^n \pi_{\theta, i}(\mu(t)) \frac{\mu_i(t + 1)}{\mu_i(t)}.
\end{equation}
(See Definition \ref{def:V}). Imagine at time $0$ we distribute wealth over the family according to a Borel probability measure $\nu_0$ on $\Theta$; we call $\nu_0$ the {\it initial distribution}. The {\it wealth distribution} of the family $\{\pi_{\theta}\}_{\theta \in \Theta}$ at time $t$ is the Borel probability measure $\nu_t$ on $\Theta$ defined by
\begin{equation} \label{eqn:wealthdistribution0}
\nu_t(B) = \frac{1}{\int_{\Theta} V_{\theta}(t) d\nu_0(\theta)} \int_B V_{\theta}(t) d\nu_0(\theta), \quad B \subset \Theta.
\end{equation}

We are interested in situations where the wealth distribution of the family $\{\pi_{\theta}\}_{\theta \in \Theta}$ concentrates exponentially around some optimal portfolio. A natural way to quantify this is to prove a {\it large deviation principle} (LDP). A standard reference of large deviation theory is \cite{DZ98}.

\begin{definition} \label{def:LDP}
Let $I: \Theta \rightarrow [0, \infty]$ be a lower-semicontinuous function, called the rate function. We say that the sequence $\{\nu_t\}_{t = 0}^{\infty}$ satisfies the large deviation principle on $\Theta$ with rate $I$ if the following statements hold.
\begin{enumerate}
\item[(i)] (Upper bound) For every closed set $F \subset \Theta$,
\[
\limsup_{t \rightarrow \infty} \frac{1}{t} \log \nu_t(F) \leq -\inf_{\theta \in F} I(\theta).
\]
\item[(ii)] (Lower bound) For every open set $G \subset \Theta$,
\[
\liminf_{t \rightarrow \infty} \frac{1}{t} \log \nu_t(G) \geq -\inf_{\theta \in G} I(\theta).
\]
\end{enumerate}
\end{definition}

A sufficient condition for existence of LDP is that the {\it asymptotic growth rate}
\begin{equation} \label{eqn:arate}
W(\theta) = \lim_{t \rightarrow \infty} \frac{1}{t} \log V_{\theta}(t)
\end{equation}
exists for all $\theta \in \Theta$ and the map $\theta \mapsto V_{\theta}(t)$ is `sufficiently regular'. As preparation, in Section \ref{sec:discrete} we study a simple situation where the family $\{\pi_{\theta}\}_{\theta \in \Theta}$, as maps from $\Delta_n$ to $\overline{\Delta}_n$, is {\it totally bounded} in the uniform metric.

\begin{theorem} \label{thm:main1}
Let  $\{\pi_{\theta}\}_{\theta \in \Theta}$ be a totally bounded family of portfolio maps from $\Delta_n$ to $\overline{\Delta}_n$. Suppose the asymptotic growth rate $W(\theta) = \lim_{t \rightarrow \infty} \frac{1}{t} \log V_{\theta}(t)$ exists for all $\theta \in \Theta$ and the initial distribution $\nu_0$ has full support on $\Theta$. Then the sequence $\nu_t$ of wealth distributions satisfies LDP on $\Theta$ with rate function
\[
I(\theta) = W^* - W(\theta),
\]
where $W^* = \sup_{\theta \in \Theta} W(\theta)$.
\end{theorem}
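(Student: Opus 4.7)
\textbf{Proof plan for Theorem \ref{thm:main1}.} The plan is to convert the LDP into Laplace-type asymptotics for the integrals appearing in the definition \eqref{eqn:wealthdistribution0} of $\nu_t$, which requires upgrading the pointwise convergence $\tfrac{1}{t}\log V_\theta(t)\to W(\theta)$ to a uniform statement. First I would exploit the one-step relative returns $r_i(s)=\mu_i(s{+}1)/\mu_i(s)\in[1/M,M]$ together with the portfolio constraint $\sum_i \pi_{\theta,i}=1$ to get the deterministic bound
\[
\bigl|\log \tfrac{V_\theta(s+1)}{V_\theta(s)} - \log \tfrac{V_{\theta'}(s+1)}{V_{\theta'}(s)}\bigr| \;\leq\; M \sum_i |\pi_{\theta,i}(\mu(s))-\pi_{\theta',i}(\mu(s))|\;\leq\; C_M\,\|\pi_\theta-\pi_{\theta'}\|_\infty,
\]
with $C_M=nM^2$ (using Lipschitzness of $\log$ on $[1/M,M]$). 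Telescoping over $s=0,\ldots,t-1$ yields $|\tfrac{1}{t}\log V_\theta(t)-\tfrac{1}{t}\log V_{\theta'}(t)|\leq C_M\|\pi_\theta-\pi_{\theta'}\|_\infty$, and passing to the limit gives a \emph{uniform continuity estimate} $|W(\theta)-W(\theta')|\leq C_M\|\pi_\theta-\pi_{\theta'}\|_\infty$. In particular $W$ is (uniformly) continuous, so $I=W^*-W$ is lower-semicontinuous as required.

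Next, I would run an Arzel\`a--Ascoli argument: given $\epsilon>0$, total boundedness supplies a finite $\epsilon$-net $\{\theta_1,\ldots,\theta_K\}$, and the hypothesis gives $N$ such that $|\tfrac{1}{t}\log V_{\theta_k}(t)-W(\theta_k)|<\epsilon$ for all $k$ and $t\geq N$. Combining this with the one-step Lipschitz estimate above and the uniform continuity of $W$ produces, for $t\geq N$ and every $\theta\in\Theta$,
\[
\bigl|\tfrac{1}{t}\log V_\theta(t) - W(\theta)\bigr| \;\leq\; (2C_M+1)\epsilon.
\]
This is the key bridge: $\tfrac{1}{t}\log V_\theta(t)\to W(\theta)$ \emph{uniformly} on $\Theta$.

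With uniform convergence in hand, the LDP follows by straightforward Laplace-style bounds. For the upper bound, given closed $F\subset\Theta$ and $\epsilon>0$, pick a point $\theta^\star$ with $W(\theta^\star)>W^*-\epsilon$ and a neighbourhood $U$ of $\theta^\star$ on which $W>W^*-2\epsilon$ (by continuity); then $\nu_0(U)>0$ by full support, and for $t$ large
\[
\int_F V_\theta(t)\,d\nu_0(\theta)\;\leq\; e^{t(\sup_F W+\epsilon)}, \qquad \int_\Theta V_\theta(t)\,d\nu_0(\theta)\;\geq\; \nu_0(U)\,e^{t(W^*-3\epsilon)}.
\]
Taking the ratio, $\tfrac{1}{t}\log\nu_t(F)$ is asymptotically at most $\sup_F W - W^* + O(\epsilon)=-\inf_F I+O(\epsilon)$. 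For the lower bound, given open $G$ and $\epsilon>0$, pick $\theta^\star\in G$ with $W(\theta^\star)>\sup_G W-\epsilon$, shrink to a neighbourhood $U\subset G$ on which $W>W(\theta^\star)-\epsilon$ (using continuity and openness of $G$), and use $\nu_0(U)>0$ together with the uniform upper estimate on the denominator to reverse the inequalities.

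The main obstacle is the uniform convergence step: without it, the trivial bound $\int_F V_\theta(t)\,d\nu_0\leq\nu_0(F)\sup_F V_\theta(t)$ cannot be converted into a $W$-bound, because individual $\theta$'s may be very slow to settle into their asymptotic growth rate. Total boundedness is exactly what reduces this control to a finite set of indices where pointwise convergence gives what we need, and the deterministic $[1/M,M]$ assumption from \cite{C91} is what makes the one-step Lipschitz estimate available in the first place. Once uniform convergence is in place, the full-support hypothesis on $\nu_0$ handles the denominator and completes the proof.
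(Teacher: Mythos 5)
Your proposal is correct and follows essentially the same route as the paper: a finite $\epsilon$-net from total boundedness plus the one-step Lipschitz bound on $\ell_\pi$ under Assumption \ref{ass:bound} yields the uniform law of large numbers (the paper's Lemma \ref{lem:ULLN1}, which it defers to a standard bracketing reference), and the LDP then follows from Laplace-type bounds on the numerator and a full-support lower bound on the denominator, exactly as in the paper's proof of Theorem \ref{thm:main1}. The only difference is cosmetic — you prove the uniform convergence step inline rather than citing it.
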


In Section \ref{sec:fgp} we consider the family of functionally generated portfolios in stochastic portfolio theory. Following \cite{PW14} and \cite{W14}, we say that a portfolio map $\pi: \Delta_n \rightarrow \overline{\Delta}_n$ is {\it functionally generated} if there exists a concave function $\Phi: \Delta_n \rightarrow (0, \infty)$ such that
\begin{equation} \label{eqn:fgpineq}
\sum_{i = 1}^n \pi_i(p) \frac{q_i}{p_i} \geq \frac{\Phi(q)}{\Phi(p)}
\end{equation}
for all $p, q \in \Delta_n$. The function $\Phi$ is called the {\it generating function} of $\pi$. Geometrically, \eqref{eqn:fgpineq} means that the vector $\left(\frac{\pi_1(p)}{p_1}, \ldots, \frac{\pi_n(p)}{p_n}\right)$ defines a supergradient of the (exponentially) concave function $\varphi = \log \Phi$ at $p$. Conversely, any positive concave function on $\Delta_n$ generates a functionally generated portfolio. As an example, the constant-weighted portfolio $\left(\pi_1, \ldots, \pi_n\right)$ where $\pi \in \overline{\Delta}_n$ is generated by the geometric mean $\Phi(p) = p_1^{\pi_1} \cdots p_n^{\pi_n}$. We denote the family of functionally generated portfolios by ${\mathcal{FG}}$. We endow ${\mathcal{FG}}$, as a space of functions from $\Delta_n$ to $\overline{\Delta}_n$ with the topology of uniform convergence. It is clear that ${\mathcal{FG}}$ is infinite dimensional and is thus `nonparametric'. Nevertheless, it can be shown that ${\mathcal{FG}}$ is convex.

Given a market path $\{\mu(t)\}_{t = 0}^{\infty} \subset \Delta_n$, let
\begin{equation} \label{eqn:empirical.measure}
{\Bbb P}_t = \frac{1}{t} \sum_{s = 0}^{t - 1} \delta_{(\mu(s), \mu(s + 1))}
\end{equation}
be the empirical measure of the pair $\left(\mu(s), \mu(s + 1)\right)$ up to time $t$. We have mentioned in Section \ref{sec:spt} that the capital distribution of the market is stable in the long run. Mathematical modeling of this stability led to active development in rank-based diffusion processes (see for example \cite{BFK05} and \cite{IPBKF11}). In our context, it seems natural to impose an asymptotic condition on the sequence $\{{\Bbb P}_t\}_{t = 0}^{\infty}$ in the spirit of \cite{J92}. The following is the main result of this paper.

\begin{theorem} \label{thm:main2}
Suppose ${\Bbb P}_t$ converges weakly to an absolutely continuous Borel probability measure ${\Bbb P}$ on $\Delta_n \times \Delta_n$.
\begin{itemize}
\item[(i)] (Glivenko-Cantelli property) The asymptotic growth rate $W(\pi)$ defined by \eqref{eqn:arate} exists for all $\pi \in {\mathcal{FG}}$. Furthermore, we have
\[
\lim_{t \rightarrow \infty} \sup_{\pi \in {\mathcal{FG}}} \left| \frac{1}{t} \log V_{\pi}(t) - W(\pi) \right| = 0.
\]
\item[(ii)] (LDP) Let $\nu_0$ be any initial distribution on ${\mathcal{FG}}$. Then the sequence $\{\nu_t\}_{t = 0}^{\infty}$ of wealth distributions given by \eqref{eqn:wealthdistribution0} satisfies LDP with rate
\[
I(\pi) =
\begin{cases}
W^* - W(\pi) &\mbox{if } \pi \in \supp(\nu_0), \\
\infty & \mbox{otherwise}, \end{cases}
\]
where $W^* = \sup_{\pi \in \supp(\nu_0)} W(\pi)$.
\item[(iii)] (Universality) There exists a probability distribution $\nu_0$ on ${\mathcal{FG}}$ such that $\sup_{\pi \in \supp(\nu_0)} W(\pi) = W^* := \sup_{\pi \in {\mathcal{FG}}} W(\pi)$ for any absolutely continuous ${\Bbb P}$ (see \eqref{eqn:mynu}). For this initial distribution, consider Cover's portfolio
\begin{equation} \label{eqn:coverfgp}
\widehat{\pi}(t) := \int_{{\mathcal{FG}}} \pi(\mu(t)) d\nu_t(\pi).
\end{equation}
Let $\widehat{V}(t)$ be the relative value of this portfolio and let $V^*(t) = \sup_{\pi \in {\mathcal{FG}}} V_{\pi}(t)$. Then
\begin{equation} \label{eqn:thmuniversality}
\lim_{t \rightarrow \infty} \frac{1}{t} \log \widehat{V}(t) = \lim_{t \rightarrow \infty} \frac{1}{t} \log V^*(t) = W^*.
\end{equation}
In particular, we have $\lim_{t \rightarrow \infty} \frac{1}{t} \log \left(\widehat{V}(t) / V^*(t)\right) = 0$.
\end{itemize}
\end{theorem}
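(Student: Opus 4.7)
The overall plan is to prove (i), then derive (ii) by a Laplace/Varadhan-style argument analogous to Theorem \ref{thm:main1}, and finally use (ii) to obtain (iii).

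For part (i), I would start from the representation
\[
\frac{1}{t}\log V_\pi(t) = \int F_\pi(p,q) \, d{\Bbb P}_t(p,q), \qquad F_\pi(p,q) := \log \sum_{i=1}^n \pi_i(p)\frac{q_i}{p_i},
\]
which follows by iterating the one-step recursion defining $V_\pi$. The ratio bound $1/M \leq q_i/p_i \leq M$ makes $F_\pi$ uniformly bounded by $\log M$, so weak convergence ${\Bbb P}_t \to {\Bbb P}$ (tested against bounded continuous functions) yields pointwise convergence to $W(\pi) := \int F_\pi\, d{\Bbb P}$. To upgrade to \emph{uniform} convergence over ${\mathcal{FG}}$ the task is to show that $\{F_\pi : \pi \in {\mathcal{FG}}\}$ is a Glivenko-Cantelli class for the weak convergence ${\Bbb P}_t \to {\Bbb P}$. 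I would exploit the structural description of ${\mathcal{FG}}$ as supergradients of bounded positive concave functions on $\Delta_n$: after a fixed normalization of the generator (say $\Phi$ taking a prescribed value at the barycenter), the classical uniform Lipschitz estimates for concave functions on compact subsets of the interior and an Arzelà-Ascoli argument give precompactness of ${\mathcal{FG}}$ in the topology of uniform convergence on compacta. This is the main obstacle, because ${\mathcal{FG}}$ is nonparametric; the boundary behavior must be controlled using that ${\Bbb P}$ is absolutely continuous (so $\mu$ spends most of its time in the interior).

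For part (ii), given the uniform Glivenko-Cantelli property, I would adapt the proof of Theorem \ref{thm:main1} directly. For a closed $F \subset {\mathcal{FG}}$ one writes
\[
\nu_t(F) \leq \frac{\sup_{\pi \in F} V_\pi(t)}{\int_{\mathcal{FG}} V_\pi(t)\, d\nu_0(\pi)}.
\]
By uniform convergence the numerator satisfies $\frac{1}{t}\log \sup_F V_\pi(t) \to \sup_F W$, and a Laplace principle against $\nu_0$ gives $\frac{1}{t}\log \int V_\pi(t) d\nu_0 \to W^* = \sup_{\supp \nu_0} W$, yielding the upper bound $-\inf_F I$. For an open $G$ and $\pi_0 \in G \cap \supp(\nu_0)$ near-optimal in $G$, take an open ball $B \subset G$ around $\pi_0$ with $\nu_0(B) > 0$; uniform convergence on $B$ combined with the strictly positive mass $\nu_0(B)$ produces $\liminf \frac{1}{t} \log \nu_t(G) \geq W(\pi_0) - W^*$, and then letting $\pi_0$ approach the supremum yields $-\inf_G I$.

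For part (iii), I would construct $\nu_0$ as a distribution whose support is \emph{dense} in ${\mathcal{FG}}$ in the uniform topology (for instance, a countable mixture of Dirac masses supported on portfolios generated by a dense family of smooth positive concave generators; separability of ${\mathcal{FG}}$ under the uniform metric is needed here and should follow from the Arzelà-Ascoli argument of part (i)). Continuity of $W$ on ${\mathcal{FG}}$, a byproduct of (i), then forces $\sup_{\supp \nu_0} W = \sup_{{\mathcal{FG}}} W = W^*$ for every absolutely continuous ${\Bbb P}$. Cover's portfolio is a buy-and-hold of the family with initial distribution $\nu_0$, so $\widehat{V}(t) = \int V_\pi(t)\, d\nu_0(\pi)$. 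The upper bound $\widehat{V}(t) \leq V^*(t)$ is immediate; for the matching lower bound, the Laplace argument from (ii) (using density of $\supp \nu_0$ and uniform convergence) yields $\frac{1}{t}\log \widehat{V}(t) \to W^*$, while uniform convergence gives $\frac{1}{t}\log V^*(t) \to W^*$, completing \eqref{eqn:thmuniversality}.
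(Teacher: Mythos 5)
Your overall architecture --- prove the uniform Glivenko--Cantelli property first, then run a Laplace-type argument for the LDP, then build $\nu_0$ from a countable dense family --- matches the paper's. But there are genuine gaps at the two places where the nonparametric structure of ${\mathcal{FG}}$ actually bites.

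First, in part (i) you assert that weak convergence ${\Bbb P}_t \rightarrow {\Bbb P}$ "yields pointwise convergence" of $\int F_\pi\, d{\Bbb P}_t$ because $F_\pi$ is bounded. This fails as stated: a functionally generated portfolio map $\pi$ is a selection of the superdifferential of a concave function and is in general \emph{discontinuous} on $\Delta_n$, so $F_\pi = \ell_\pi$ is not a continuous test function and the Portmanteau theorem does not apply directly. Even the pointwise statement already requires the absolute continuity of ${\Bbb P}$ --- its role is to annihilate the Lebesgue-null set where the generating function $\Phi$ fails to be differentiable, not merely to control boundary behavior as you suggest. The paper's proof decomposes the differentiability set into countably many small ${\Bbb P}$-continuity sets on which $\pi$ is nearly constant and replaces $\ell_\pi$ there by the genuinely continuous functions $\ell_{\pi(p_k)}$. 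Relatedly, your route to uniformity via "precompactness of ${\mathcal{FG}}$ in the topology of uniform convergence on compacta" by Arzel\`a--Ascoli cannot work: the paper proves that ${\mathcal{FG}}$ is not totally bounded and not even separable in the uniform metric (the two-stock portfolios switching between $(1,0)$ and $(0,1)$ at $p_1 = \theta$ form an uncountable discrete set). Arzel\`a--Ascoli applies to the normalized \emph{generating functions} --- that is the content of the compactness of $({\mathcal{C}}_0, d)$ --- and transferring this to the portfolio maps requires a uniform stability result for superdifferentials at points of differentiability (Lemma \ref{lem:superdiffuniform} in the paper), again mediated by absolute continuity of ${\Bbb P}$. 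Your bracketing must therefore be indexed by $d$-balls in ${\mathcal{C}}_0$, not by uniform-metric balls in ${\mathcal{FG}}$.

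The same issue sinks your construction in part (iii): you invoke "separability of ${\mathcal{FG}}$ under the uniform metric," which is false. The correct construction takes a countable $d$-dense subset $\{\Phi_k\}$ of the compact metric space $({\mathcal{C}}_0, d)$, one portfolio $\pi_k$ per generator, and $\nu_0 = \sum_k \lambda_k \delta_{\pi_k}$; the identity $\sup_{\pi \in \supp(\nu_0)} W(\pi) = \sup_{\pi \in {\mathcal{FG}}} W(\pi)$ then follows from continuity of $W$ with respect to the metric $d$ on generating functions (a byproduct of the local estimate in part (i)), not with respect to the uniform metric on portfolio maps. Part (ii) as you present it is fine modulo (i).
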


For example, if $\{\mu(t)\}$ is an ergodic time homogeneous Markov chain, we may take ${\Bbb P}$ to be the stationary distribution of $\{(\mu(t), \mu(t + 1))\}$.

In \cite{W14} we studied an optimization problem for functionally generated portfolio analogous to nonparametric density estimation. Regarding $\log V_{\pi}(t)$ as the log likelihood function for estimating $\pi$ and $\nu_0$ as the prior distribution, Theorem \ref{thm:main2}(ii) shows that the posterior distribution $\nu_t$ satisfies an LDP. Convergence properties of posterior distributions in nonparametric statistics are delicate (see for example \cite{BSW99}) and large deviation results are rare. For Dirichlet priors an LDP is proved in \cite{GO00}. Theorem \ref{thm:main2}(iii) shows that the posterior mean \eqref{eqn:coverfgp} performs asymptotically as good as the best portfolio in ${\mathcal{FG}}$. If we think of the results in \cite{W14} as point estimation of functionally generated portfolio by maximum likelihood, Theorem \ref{thm:main2} gives the Bayesian counterpart. 

Another natural question is to relate Cover's universal portfolio with the num\'{e}raire portfolio (also called the log-optimal portfolio). In the context of stochastic portfolio theory, this question is studied in \cite{CSW16} in both discrete and continuous time.

For practical applications we would like to strengthen Theorem \ref{thm:main2} to include quantitative bounds as well as algorithms for computing $\widehat{\pi}$. This and other further problems are gathered in Section \ref{sec:conclusion}.

\section{Wealth distributions of portfolios} \label{sec:prelim}
\subsection{Stock and market weight}
We consider an equity market with $n \geq 2$ non-dividend stocks in discrete time. The dynamics of the market will be specified in terms of the {\it market weights} $\mu(t) = \left(\mu_1(t), \ldots, \mu_n(t)\right)$ given by \eqref{eqn:marketweight}. The vector of market weights $\mu(t)$ takes values in the open unit simplex $\Delta_n$ in $\mathbb{R}^n$. Suppose the market capitalization of stock $i$ at time $t$ is $X_i(t)$ and its simple return over the time interval $[t, t + 1]$ is $R_i(t)$. The market weights at time $t + 1$ are then given by
\[
\mu_i(t + 1) = \frac{X_i(t)(1 + R_i(t))}{X_1(t)(1 + R_1(t)) + \cdots + X_n(t)(1 + R_n(t))}.
\]
We visualize the market as a discrete path in $\Delta_n$. This includes only changes in capitalizations due to returns and excludes implicitly all changes due to corporate actions such as public offerings. We assume that $\{\mu(t)\}_{t = 0}^{\infty}$ is an arbitrary sequence in $\Delta_n$; in particular, no underlying probability space is involved. The assumptions we state will be in terms of the path properties of the sequence $\{\mu(t)\}_{t = 0}^{\infty}$. One such assumption is the following.

\begin{assumption} \label{ass:bound}
There exists a constant $M > 0$ such that the market weight sequence $\{\mu(t)\}_{t = 0}^{\infty}$ satisfies
\begin{equation} \label{eqn:marketassumption}
\frac{1}{M} \leq \frac{\mu_i(t + 1)}{\mu_i(t)} \leq M
\end{equation}
for all $1 \leq i \leq n$ and $t \geq 0$. Let
\begin{equation} \label{eqn:pairstatespace}
{\mathcal{S}} = \left\{ (p, q) \in \Delta_n \times \Delta_n : \frac{1}{M} \leq \frac{q_i}{p_i} \leq M \text{ for } 1 \leq i \leq n\right\}.
\end{equation}
Then \eqref{eqn:marketassumption} states that $\left(\mu(t), \mu(t + 1)\right) \in {\mathcal{S}}$ for all $t \geq 0$.
\end{assumption}

Assumption \ref{ass:bound} states that the relative returns of the stocks are bounded; this is purely for technical reasons and can be found in previous work such as \cite{C91, HSSW98, CB03, HK15}. Note that the value of $M$ is unknown to the investor. While the assumptions that the stocks do not die (since $\mu(t) \in \Delta_n$ for all $t$) and do not pay dividends are unrealistic, they are imposed to reduce technicalities so that we can focus on the key ideas concerning long term properties of portfolios. Using a more general model, one can reinvest dividends and consider varying number of stocks, but this would complicate the analysis. Similar assumptions are common in stochastic portfolio theory (see \cite[Chapter 1]{F02} and \cite[Chapter 1]{FKsurvey}). 

\subsection{Portfolio and relative value}
A {\it portfolio vector} is an element of $\overline{\Delta}_n$, the closed unit simplex. All portfolios considered are fully invested in the stock market, and short selling is prohibited. At each time $t$ the investor chooses a portfolio vector $\pi(t)$, and the performance of the resulting self-financing portfolio will be measured relative to the market portfolio. Formally, we define
\[
V_{\pi}(t) = \frac{\text{growth of \$1 of the portfolio }\pi}{\text{growth of \$1 of the market portfolio }\mu}.
\]

\begin{definition} [Relative value] \label{def:V}
Let $\{\pi(t)\}_{t = 0}^{\infty}$ be sequence of portfolio vectors. Given the market weight sequence $\{\mu(t)\}_{t = 0}^{\infty}$, the relative value of $\pi$ (with respect to the market portfolio) is the sequence $\{V_{\pi}(t)\}_{t = 0}^{\infty}$ defined by $V_{\pi}(0) = 1$ and
\begin{equation} \label{eqn:V}
\frac{V_{\pi}(t + 1)}{V_{\pi}(t)} = \sum_{i = 1}^n \pi_i(t) \frac{\mu_i(t + 1)}{\mu_i(t)} =: \pi(t) \cdot \frac{\mu(t + 1)}{\mu(t)}, \quad t \geq 0.
\end{equation}
Here $a \cdot b$ is the Euclidean inner product and $a / b$ is the vector of componentwise ratios whenever they are well-defined.
\end{definition}

For a derivation of \eqref{eqn:V} see \cite{PW13}. In \eqref{eqn:V}, it is implicitly assumed that the investor is a price taker and the trades do not influence prices. We will restrict to portfolio strategies that are deterministic functions of the current market weight, i.e., $\pi(t) = \pi(\mu(t))$. In this case a portfolio strategy is fully specified by a mapping $\pi: \Delta_n \rightarrow \overline{\Delta}_n$.

\begin{definition} [Portfolio map]
A portfolio map is a mapping $\pi: \Delta_n \rightarrow \overline{\Delta}_n$. The market portfolio is the identity map $\pi(p) = p$ and will be denoted by $\mu$. A portfolio is said to be constant-weighted if $\pi$ is identically constant.
\end{definition}

\subsection{Cover's portfolio as a market portfolio of portfolios}
Let $\Theta$ be an index set and suppose each $\theta \in \Theta$ is associated with a portfolio map $\pi_{\theta}: \Delta_n \rightarrow \overline{\Delta}_n$. The individual components of $\pi_{\theta}$ will be denoted by $\left(\pi_{\theta, 1}, \ldots, \pi_{\theta, n}\right)$. (Sometimes we will use $\pi_1, ..., \pi_k$ to refer to a sequence of portfolios, and the meaning should be clear from the context.) We are interested in the properties of $V_{\theta}(t) := V_{\pi_{\theta}}(t)$ as a function of {\it both} $t$ and $\theta$. To this end, we will consider an imaginary market whose basic assets are the portfolios $\pi_{\theta}$.

We assume that $\Theta$ is a topological space and we are given a Borel probability measure $\nu_0$ on $\Theta$. The measure $\nu_0$ will be called the {\it initial distribution}. The {\it support} $\supp(\nu_0)$ of $\nu_0$ is the smallest closed subset $F$ of $\Theta$ satisfying $\nu_0(F) = 1$. We say that $\nu_0$ has {\it full support} if $\supp(\nu_0) = \Theta$. Intuitively, the imaginary market is defined by distributing unit wealth at time $0$ over the portfolios $\{\pi_{\theta}\}_{\theta \in \Theta}$ according to the initial distribution $\nu_0$, and letting the portfolios evolve. At time $0$, the portfolio $\pi_{\theta}$ receives wealth $\nu_0(d\theta)$ which grows to $V_{\theta}(t) \nu_0(d\theta)$ at time $t$. Thus
\begin{equation} \label{eqn:universalportfolio}
\widehat{V}(t) := \int_{\Theta} V_{\theta}(t) d\nu_0(\theta)
\end{equation}
is the total relative value of the imaginary market at time $t$. In order that \eqref{eqn:universalportfolio} and related quantities (such as \eqref{eqn:wealthdistribution}) are well defined, we assume that the map $(p, \theta) \mapsto \pi_{\theta}(p)$ on $\Delta_n \times \Theta$ is jointly measurable in $(p, \theta)$. Measurability usually follows immediately from the definition of the family considered. By Assumption \ref{ass:bound} and \eqref{eqn:V} we have $V_{\pi}(t + 1) / V_{\pi}(t) \leq M$ for any portfolio, so $V^*(t) < \infty$ and the integral in \eqref{eqn:universalportfolio} is finite.

\begin{definition} [Wealth distribution]
Given a family of portfolios $\{\pi_{\theta}\}_{\theta \in \Theta}$ and an initial distribution $\nu_0$, the wealth distribution is the sequence of Borel probability measures $\{\nu_t\}_{t = 0}^{\infty}$ on $\Theta$ defined by
\begin{equation} \label{eqn:wealthdistribution}
\nu_t(B) = \frac{1}{\widehat{V}(t)} \int_B  V_{\theta}(t) d\nu_0(\theta),
\end{equation}
where $B$ ranges over the measurable subsets of $\Theta$.
\end{definition}

Note that $\frac{d\nu_t}{d\nu_0}(\theta) = \frac{1}{\widehat{V}(t)} V_{\theta}(t)$. The main interest in the quantity $\widehat{V}(t)$ is the following fact first exploited by Cover in \cite{C91} (where $\{\pi_{\theta}\}_{\theta \in \Theta}$ is the family of constant-weighted portfolios). A proof can be found in \cite[Lemma 3.1]{CB03}.

\begin{lemma} [Cover's portfolio]
For each $t$, define the portfolio weight vector
\begin{equation} \label{eqn:posteriormean}
\widehat{\pi}(t) := \int_{\Theta} \pi_{\theta}(\mu(t)) d\nu_t(\theta).
\end{equation}
Then $V_{\widehat{\pi}}(t) \equiv \widehat{V}(t)$ for all $t$. We call $\widehat{\pi}$ Cover's portfolio.
\end{lemma}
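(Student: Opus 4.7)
The plan is to prove the identity $V_{\widehat{\pi}}(t) = \widehat{V}(t)$ by induction on $t$. The base case $t = 0$ is immediate: by Definition~\ref{def:V} we have $V_{\widehat{\pi}}(0) = 1$, while $\widehat{V}(0) = \int_{\Theta} V_{\theta}(0) \, d\nu_0(\theta) = \int_{\Theta} 1 \, d\nu_0(\theta) = 1$ since $\nu_0$ is a probability measure and $V_{\theta}(0) \equiv 1$.

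For the inductive step, assume $V_{\widehat{\pi}}(t) = \widehat{V}(t)$. The key calculation is to expand the one-step relative growth factor of $\widehat{\pi}$ and move the inner product past the integral. Using the Radon--Nikodym identity $\frac{d\nu_t}{d\nu_0}(\theta) = V_{\theta}(t)/\widehat{V}(t)$ noted right after \eqref{eqn:wealthdistribution}, and the linearity of the dot product (which is just a finite sum, so swapping with $\int d\nu_0$ is immediate), I would compute
\[
\widehat{\pi}(t) \cdot \frac{\mu(t+1)}{\mu(t)} = \int_{\Theta} \pi_{\theta}(\mu(t)) \cdot \frac{\mu(t+1)}{\mu(t)} \, d\nu_t(\theta) = \frac{1}{\widehat{V}(t)} \int_{\Theta} V_{\theta}(t) \left( \pi_{\theta}(\mu(t)) \cdot \frac{\mu(t+1)}{\mu(t)} \right) d\nu_0(\theta).
\]
By \eqref{eqn:V}, the bracketed quantity is $V_{\theta}(t+1)/V_{\theta}(t)$, so the integrand simplifies to $V_{\theta}(t+1)$ and the right-hand side equals $\widehat{V}(t+1)/\widehat{V}(t)$.

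Combining this with \eqref{eqn:V} applied to $\widehat{\pi}$ and the inductive hypothesis gives
\[
V_{\widehat{\pi}}(t+1) = V_{\widehat{\pi}}(t) \left( \widehat{\pi}(t) \cdot \frac{\mu(t+1)}{\mu(t)} \right) = \widehat{V}(t) \cdot \frac{\widehat{V}(t+1)}{\widehat{V}(t)} = \widehat{V}(t+1),
\]
completing the induction. There is essentially no obstacle here beyond checking finiteness: by Assumption~\ref{ass:bound} we have $V_{\theta}(t+1) \leq M \cdot V_{\theta}(t)$ uniformly in $\theta$, so by iteration $V_{\theta}(t) \leq M^t$ for each fixed $t$, which makes every integral in the argument absolutely convergent and justifies the interchange of integration and the finite sum defining the dot product.
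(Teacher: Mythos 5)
Your proof is correct: the induction on $t$, with the one-step identity $\widehat{\pi}(t)\cdot\frac{\mu(t+1)}{\mu(t)} = \widehat{V}(t+1)/\widehat{V}(t)$ obtained by pulling the (finite-sum) dot product through the integral and using $\frac{d\nu_t}{d\nu_0}(\theta)=V_\theta(t)/\widehat{V}(t)$, is exactly the standard telescoping argument that the paper outsources to \cite[Lemma 3.1]{CB03}. The finiteness check via $V_\theta(t)\le M^t$ under Assumption \ref{ass:bound} is the same justification the paper gives for $\widehat{V}(t)<\infty$, so nothing is missing.
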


For each time $t$, let
\begin{equation} \label{eqn:best}
V^*(t) = \sup_{\theta \in \Theta} V_{\theta}(t)
\end{equation}
be the performance of the best portfolio in the family over the time interval $[0, t]$. The original goal of Cover's portfolio \eqref{eqn:posteriormean} is to track $V^*(t)$ in the sense that
\begin{equation} \label{eqn:universality0}
\frac{1}{t} \log \frac{\widehat{V}(t)}{V^*(t)} \rightarrow 0
\end{equation}
as $t \rightarrow \infty$. If \eqref{eqn:universality0} holds, the portfolio $\widehat{\pi}$ performs asymptotically as good as the best portfolio in the family. In Section \ref{sec:example} we give a simple example to show that \eqref{eqn:universality0} does not always hold. The asymptotic behavior of \eqref{eqn:universality0} naturally links to the concentration of the wealth distribution and motivated our study.

\begin{remark} \label{rem:stat}
As pointed out by several authors (see for example \cite{CB03}),  the construction of Cover's portfolio \eqref{eqn:posteriormean} as a wealth-weighted average has a strong Bayesian flavor. Imagine the problem of finding the best portfolio in the family $\{\pi_{\theta}\}_{\theta \in \Theta}$. Little is known at time $0$, but from historical data, experience and insider knowledge one may form a {\it prior distribution} $\nu_0$ which describes the belief of the investor. At time $t$, having observed the returns of the portfolios up to time $t$, the investor updates the belief with the {\it posterior distribution} $\nu_t$ which satisfies
\[
\frac{\dd \nu_t}{\dd \nu_0}(\theta) \propto \frac{V_{\theta}(t)}{V_{\theta}(0)} = V_{\theta}(t).
\]
This corresponds to Bayes' rule where the relative return plays the role of the likelihood. Note that this procedure is time-consistent. Namely, for $t > s$, we have
\[
\frac{\dd \nu_t}{\dd \nu_s}(\theta) \propto \frac{V_{\theta}(t)}{V_{\theta}(s)}.
\]
Cover's portfolio \eqref{eqn:posteriormean} is then the posterior mean of $\pi_{\theta}(\mu(t))$.
\end{remark}

\section{LDP for totally bounded families} \label{sec:discrete}
To gain intuition about how Cover's portfolio and the wealth distribution behave for a general (possibly nonparametric) family, and to prepare for the more technical treatment of functionally generated portfolio in Section \ref{sec:fgp}, in this section we study large deviation properties of wealth distributions where the family of portfolios is totally bounded with respect to the uniform metric. We will use the following representation of portfolio value which is a direct consequence of Definition \ref{def:V}.

\begin{lemma} \label{lem:integral}
Let $\pi: \Delta_n \rightarrow \overline{\Delta}_n$ be a portfolio map. Then
\begin{equation}
\frac{1}{t} \log V_{\pi}(t) = \int_{\Delta_n \times \Delta_n} \ell_{\pi}(p, q) \dd {\Bbb P}_t(p, q)
\end{equation}
for all $t \geq 1$, where
\begin{equation} \label{eqn:returnfunction}
\ell_{\pi}(p, q) := \log \left( \pi(p) \cdot \frac{q}{p} \right),
\end{equation}
and ${\Bbb P}_t$, defined by \eqref{eqn:empirical.measure}, is the empirical measure of the pair $(\mu(s), \mu(s+1))$ up to time $t$.
\end{lemma}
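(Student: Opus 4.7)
The plan is to obtain the identity by a direct telescoping argument applied to the recursion in Definition \ref{def:V} and then re-expressing the resulting Cesàro sum as an integral against the empirical measure ${\Bbb P}_t$.

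First, I would take the logarithm of the one-step update
\[
\frac{V_{\pi}(s+1)}{V_{\pi}(s)} = \pi(\mu(s)) \cdot \frac{\mu(s+1)}{\mu(s)},
\]
which (by the definition \eqref{eqn:returnfunction} of $\ell_\pi$) reads $\log V_{\pi}(s+1) - \log V_{\pi}(s) = \ell_{\pi}(\mu(s), \mu(s+1))$. Note that by Assumption \ref{ass:bound} the ratio $q_i/p_i$ lies in $[1/M, M]$ whenever $(p,q) \in \mathcal{S}$, so $\pi(p) \cdot (q/p) \in [1/M, M]$ and the logarithm is well defined and bounded on the relevant set; in particular $V_\pi(s) > 0$ for every $s$, so no degenerate cases arise.

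Next, I would sum the telescoping identity from $s = 0$ to $s = t-1$. Since $V_{\pi}(0) = 1$, we get
\[
\log V_{\pi}(t) \;=\; \sum_{s=0}^{t-1} \ell_{\pi}(\mu(s), \mu(s+1)).
\]
Dividing by $t$ and unfolding the definition \eqref{eqn:empirical.measure} of the empirical measure ${\Bbb P}_t = \frac{1}{t}\sum_{s=0}^{t-1} \delta_{(\mu(s), \mu(s+1))}$ identifies the right-hand side with $\int_{\Delta_n \times \Delta_n} \ell_{\pi}(p,q) \, \dd {\Bbb P}_t(p,q)$, which is the claimed formula.

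There is no serious obstacle here: the argument is a one-line telescoping once the recursion is written in logarithmic form, and the only thing worth flagging is that the boundedness in Assumption \ref{ass:bound} guarantees $\ell_\pi$ is finite along the market path so that each term in the telescoping sum is well defined.
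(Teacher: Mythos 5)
Your proof is correct and is exactly the argument the paper intends: the lemma is stated there as a direct consequence of Definition \ref{def:V}, obtained by taking logarithms of the one-step recursion, telescoping, and recognizing the Ces\`aro average as the integral against ${\Bbb P}_t$. Your remark that Assumption \ref{ass:bound} keeps $\ell_\pi$ bounded between $\log\frac{1}{M}$ and $\log M$ on $\mathcal{S}$ matches the paper's own observation in Section \ref{sec:discreteext}.
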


\subsection{Finite state}
To fix ideas we begin with an even simpler situation where the sequence $\{\mu(t)\}_{t = 0}^{\infty}$ takes values in a {\it finite} set $E \subset \Delta_n$. The finite set $E$ may be obtained by approximating the simplex by a finite grid. Let
\[
\Theta = \left\{\pi: E \rightarrow \overline{\Delta}_n\right\} = \left(\overline{\Delta}_n\right)^E
\]
be the set of all portfolio maps on $E$. (Note that the family is indexed by the symbol $\pi$ itself.) We equip $\Theta$ with the topology of uniform convergence. Since $E$ is finite, this is the same as the topology of pointwise convergence. Note that $\Theta$ is compact and convex. 

\begin{lemma} \label{lem:logoptimal}
Suppose ${\mathbb{P}}_t$ converges weakly to a probability measure ${\mathbb{P}}$ on $E \times E$. Then for each $\pi \in \Theta$, the asymptotic growth rate exists and we have
\[
W(\pi) = \lim_{t \rightarrow \infty} \frac{1}{t} \log V_{\pi}(t) = \int_{E \times E} \ell_{\pi} \dd {\Bbb P},
\]
where $\ell_{\pi}$ is given by \eqref{eqn:returnfunction}. Moreover, there is a portfolio $\pi^* \in \Theta$ satisfying
\begin{equation} \label{eqn:logoptimal}
W(\pi^*) = W^* := \max_{\pi \in \Theta} W(\pi).
\end{equation}
If we write ${\Bbb P}(p, q) = {\Bbb P}_1(p) {\Bbb P}_2\left(q \mid p\right)$, where ${\Bbb P}_1$ is the first marginal and ${\Bbb P}_2$ is the conditional distribution, then
\begin{equation} \label{eqn:logoptimalportfolio}
\pi^*(p) = \argmax_{x \in \overline{\Delta}_n} \int_E \log \left( x \cdot \frac{q}{p}\right) {\Bbb P}_2\left(d q \mid p\right)
\end{equation}
for all $p$ where ${\Bbb P}_1(p) > 0$.
\end{lemma}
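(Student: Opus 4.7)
The plan is to apply Lemma \ref{lem:integral} and exploit the fact that $E$ is finite so that everything reduces to a continuous optimization on a compact product of simplices.

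First I would establish the limit. By Lemma \ref{lem:integral}, $\frac{1}{t}\log V_{\pi}(t) = \int \ell_{\pi} \dd {\Bbb P}_t$. Under Assumption \ref{ass:bound}, both ${\Bbb P}_t$ and ${\Bbb P}$ are supported on ${\mathcal S} \cap (E\times E)$, and on ${\mathcal S}$ we have $\frac{1}{M} \leq \pi(p)\cdot \tfrac{q}{p} \leq M$ for any portfolio vector $\pi(p) \in \overline{\Delta}_n$, so $|\ell_\pi| \leq \log M$. Because $E$ is finite (hence discrete), every real-valued function on $E\times E$ is continuous and bounded, so weak convergence ${\Bbb P}_t \Rightarrow {\Bbb P}$ yields $\int \ell_\pi \dd {\Bbb P}_t \to \int \ell_\pi \dd {\Bbb P}$. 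This gives the existence of $W(\pi)$ and the integral formula.

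Next I would prove existence of $\pi^*$. The space $\Theta = (\overline{\Delta}_n)^E$ is a finite product of compact convex sets, hence itself compact and convex in the topology of pointwise (equivalently, uniform) convergence. I claim $W:\Theta\to{\Bbb R}$ is continuous: if $\pi^{(k)} \to \pi$ pointwise on $E$, then for each fixed $(p,q) \in E\times E$ with $(p,q)\in {\mathcal S}$ the scalar $\pi^{(k)}(p)\cdot \tfrac{q}{p}$ converges to $\pi(p)\cdot \tfrac{q}{p}$ by continuity of the inner product, and the log is continuous on $[\tfrac{1}{M},M]$, so $\ell_{\pi^{(k)}}(p,q) \to \ell_\pi(p,q)$. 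Since $E\times E$ is finite, $\int \ell_{\pi^{(k)}} \dd {\Bbb P} \to \int \ell_\pi \dd {\Bbb P}$. A continuous function on a compact set attains its maximum, so $\pi^*$ exists.

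Finally I would derive the explicit form \eqref{eqn:logoptimalportfolio} by noting that the objective decouples over $p$. Writing ${\Bbb P}(p,q) = {\Bbb P}_1(p){\Bbb P}_2(q\mid p)$,
\begin{equation*}
W(\pi) \;=\; \sum_{p\in E} {\Bbb P}_1(p) \int_E \log\!\left(\pi(p)\cdot \tfrac{q}{p}\right) {\Bbb P}_2(\dd q \mid p),
\end{equation*}
and the summands depend on disjoint coordinates $\pi(p) \in \overline{\Delta}_n$ of $\pi$. Hence $W$ can be maximized pointwise in $p$: for each $p$ with ${\Bbb P}_1(p) > 0$, the inner integral is a continuous concave function of $x \in \overline{\Delta}_n$ (concavity follows because $x\mapsto \log(x\cdot q/p)$ is concave, being the composition of $\log$ with a positive linear map, and concavity is preserved under integration), so a maximizer $\pi^*(p)$ exists on the compact simplex and satisfies \eqref{eqn:logoptimalportfolio}; for $p$ with ${\Bbb P}_1(p)=0$ we may assign $\pi^*(p)$ arbitrarily. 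There is no real obstacle here beyond the bookkeeping — the crux is simply that finiteness of $E$ makes the objective separable over $p$ and reduces the problem to a sequence of standard log-optimal (Kelly) sub-problems on $\overline{\Delta}_n$.
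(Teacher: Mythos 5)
Your proof is correct and follows essentially the same route as the paper's: weak convergence plus finiteness of $E\times E$ gives the limit, compactness of $\Theta$ and continuity of $W$ give the maximizer, and the disintegration ${\Bbb P} = {\Bbb P}_1 {\Bbb P}_2$ decouples the optimization over $p$. The extra details you supply (boundedness of $\ell_\pi$, concavity of the inner objective) are correct but not needed beyond what the paper's one-line justifications already cover.
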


A portfolio satisfying \eqref{eqn:logoptimal} may be called a log-optimal portfolio map.

\begin{proof}
Since $E \times E$ is a finite set, by weak convergence we have
\[
W(\pi) = \lim_{t \rightarrow \infty}  \int_{E \times E}  \ell_{\pi} d {\Bbb P}_t = \int_{E \times E}\ell_{\pi} d {\Bbb P}.
\]
Thus the asymptotic growth rate exists for all $\pi \in \Theta$. Clearly $W(\cdot)$ is a continuous function on $\Theta$. Since $\Theta$ is compact, it has a maximizer $\pi^*$. The last statement follows from the representation
\[
W(\pi) = \int_E \left( \int_E \ell_{\pi}(p, q) {\Bbb P}_2\left( \dd q \mid p \right) \right) {\Bbb P}\left(d p\right). \qedhere
\]
\end{proof}

The following LDP is a special case of Theorem \ref{thm:main1} which will be proved in the next subsection.

\begin{theorem} [Finite state LDP] \label{thm:LDPfinite}
Suppose $\{\mu(t)\}_{t = 0}^{\infty}$ takes values in a finite set $E \subset \Delta_n$. Let $\Theta = \left(\overline{\Delta}_n\right)^E$ and suppose that the initial distribution $\nu_0$ has full support.
\begin{enumerate}
\item[(i)] Cover's portfolio $\widehat{\pi}$ defined by \eqref{eqn:posteriormean} satisfies the universality property \eqref{eqn:universality}:
\begin{equation} \label{eqn:universality}
\lim_{t \rightarrow \infty} \frac{1}{t} \log \frac{\widehat{V}(t)}{V^*(t)} = 0.
\end{equation}
\item[(ii)] If ${\Bbb P}_t$ converges weakly to a probability measure ${\Bbb P}$ on $E \times E$, the family $\{\nu_t\}_{t = 0}^{\infty}$ satisfies the large deviation principle on $\Theta$ with the convex rate function
\[
I(\pi) = W^* - W(\pi).
\]
\end{enumerate}
\end{theorem}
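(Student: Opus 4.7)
The plan is to deduce part (ii) directly from Theorem \ref{thm:main1} and to prove part (i) via a Laplace-type argument that exploits the full support of $\nu_0$. For (ii), since $E$ is finite and $\overline{\Delta}_n$ is compact, the space $\Theta = (\overline{\Delta}_n)^E$ equipped with the uniform topology is compact, hence totally bounded. By Lemma \ref{lem:logoptimal}, $W(\pi) = \int_{E \times E} \ell_\pi \, d{\Bbb P}$ exists for every $\pi \in \Theta$. All hypotheses of Theorem \ref{thm:main1} are thus met, yielding the LDP on $\Theta$ with rate $I(\pi) = W^* - W(\pi)$. Convexity of $I$ reduces to concavity of $W$: the map $\pi \mapsto \ell_\pi(p,q) = \log(\pi(p) \cdot q/p)$ is concave in $\pi$ for each fixed $(p,q)$, being the logarithm of a positive linear functional of $\pi$, and concavity survives integration against ${\Bbb P}$.

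The proof of part (i) proceeds in two steps. First, I would establish the Glivenko--Cantelli-type uniform convergence
\[
\sup_{\pi \in \Theta} \left| \tfrac{1}{t} \log V_\pi(t) - W(\pi) \right| \longrightarrow 0 \quad \text{as } t \to \infty.
\]
Under Assumption \ref{ass:bound}, $\pi(p) \cdot q/p \in [1/M, M]$ for $(p,q) \in \mathcal{S}$, so $\ell_\pi$ is uniformly Lipschitz in $\pi$ in the uniform metric $d$ on $\Theta$, with a constant $C = C(M)$ independent of $(p,q)$. Lemma \ref{lem:integral} then gives $|\tfrac{1}{t}\log V_\pi(t) - \tfrac{1}{t}\log V_{\pi'}(t)| \leq C d(\pi, \pi')$ and similarly $|W(\pi) - W(\pi')| \leq C d(\pi, \pi')$. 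Combining this equicontinuity with the pointwise convergence $\tfrac{1}{t}\log V_\pi(t) \to W(\pi)$ (Lemma \ref{lem:logoptimal}) via a finite $\varepsilon$-net of the compact set $\Theta$ yields the uniform convergence. In particular $\tfrac{1}{t}\log V^*(t) \to W^*$, and $W^*$ is attained at some $\pi^* \in \Theta$.

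For the universality \eqref{eqn:universality} itself, the upper bound $\widehat{V}(t) \leq V^*(t)$ is immediate from the definition of $\widehat V$. For the lower bound, fix $\varepsilon > 0$ and let $U$ be the open uniform ball of radius $\varepsilon / C$ around $\pi^*$. On $U$ the Lipschitz bound gives $V_\pi(t) \geq V_{\pi^*}(t) \, e^{-\varepsilon t}$, so
\[
\widehat{V}(t) \geq \int_U V_\pi(t) \, d\nu_0(\pi) \geq \nu_0(U) \cdot V_{\pi^*}(t) \cdot e^{-\varepsilon t}.
\]
Since $\nu_0$ has full support, $\nu_0(U) > 0$; taking $\tfrac{1}{t}\log$ and using $\tfrac{1}{t}\log V_{\pi^*}(t) \to W^*$ gives $\liminf_{t \to \infty} \tfrac{1}{t}\log \widehat{V}(t) \geq W^* - \varepsilon$, and since $\varepsilon$ is arbitrary, $\tfrac{1}{t}\log \widehat V(t) \to W^*$. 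Combined with $\tfrac{1}{t}\log V^*(t) \to W^*$, this proves \eqref{eqn:universality}. The main obstacle is really the uniform convergence step: every other ingredient is routine Laplace-type asymptotics built on top of it. In the finite-state setting equicontinuity is essentially immediate from the finiteness of $E$, but the analogous step in Section \ref{sec:fgp} will be substantially more delicate for the infinite-dimensional family $\mathcal{FG}$.
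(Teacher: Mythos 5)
Your treatment of part (ii) is correct and is exactly the paper's route: $\Theta=(\overline{\Delta}_n)^E$ is compact (hence totally bounded), Lemma \ref{lem:logoptimal} supplies the existence of $W(\pi)$, and Theorem \ref{thm:main1} gives the LDP. Your convexity argument (concavity of $\pi\mapsto\ell_\pi(p,q)$ as the log of a positive affine functional, preserved under integration) fills in a step the paper dismisses as ``easy to see,'' which is fine.

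Part (i), however, has a genuine gap: you prove it under the hypothesis of part (ii), not under the hypothesis of part (i). The weak convergence ${\Bbb P}_t\Rightarrow{\Bbb P}$ is assumed \emph{only} in (ii); part (i) asserts universality for an \emph{arbitrary} path $\{\mu(t)\}$ in the finite set $E$. Your argument invokes Lemma \ref{lem:logoptimal} for the pointwise convergence $\frac{1}{t}\log V_\pi(t)\to W(\pi)$, the existence of a maximizer $\pi^*$ of $W$, and the limits $\frac{1}{t}\log V_{\pi^*}(t)\to W^*$ and $\frac{1}{t}\log V^*(t)\to W^*$ --- none of which are available without the weak convergence of the empirical measures. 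Without it, the asymptotic growth rate need not exist for any $\pi$, and the near-optimizer of $V_\cdot(t)$ can be a different portfolio at each $t$, so there is no fixed $\pi^*$ whose $\nu_0$-neighborhood you can integrate over. The paper's proof of (i) goes through Lemma \ref{lem:universality1}, which avoids this entirely: at each time $t$ one picks a \emph{time-dependent} near-maximizer $\pi^{[t]}$ of $V_\cdot(t)$, locates it in one of \emph{finitely many} $\epsilon'$-balls $B_1,\dots,B_N$ covering $\Theta$, and bounds $\widehat{V}(t)\geq\nu_0(B_{j^{[t]}})\,e^{-3\epsilon t}\,V^*(t)$; since $j^{[t]}$ ranges over a finite set and $\nu_0$ has full support, $\frac{1}{t}\log\nu_0(B_{j^{[t]}})\to 0$ uniformly. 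Your uniform Lipschitz estimate $|\frac{1}{t}\log V_\pi(t)-\frac{1}{t}\log V_{\pi'}(t)|\leq C\,d(\pi,\pi')$ is exactly the right ingredient and holds pathwise with no convergence assumption; you just need to deploy it around the time-dependent near-maximizer rather than a fixed maximizer of $W$. As written, your proof establishes (i) only as a corollary of (ii)'s hypotheses, which is strictly weaker than the stated theorem.
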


\begin{remark}
In the setting of Theorem \ref{thm:LDPfinite}(i), it is not difficult to show (see \cite[Theorem 3.1]{CB03}) that $V^*(t) / \widehat{V}(t)$ is bounded above by a constant multiple of $t^d$, where $d = |E|(n - 1)$ is the `dimension' of $\Theta$ and $|E|$ is the cardinality of $E$.
\end{remark}

\subsection{LDP for totally bounded families} \label{sec:discreteext}
In this subsection we prove Theorem \ref{thm:main1}. Now $\{\mu(t)\}_{t = 0}^{\infty}$ is any sequence in $\Delta_n$ satisfying Assumption \ref{ass:bound}.

Let $\Theta$ be a subset of $L^{\infty}\left(\Delta_n, \overline{\Delta}_n\right)$, the set of functions from $\Delta_n$ to $\overline{\Delta}_n$ equipped with the supremum metric $\|\cdot\|_{\infty}$ (defined in terms of the Euclidean norm $|\cdot|$ on $\overline{\Delta}_n$). We endow $\Theta$ with the induced topology, i.e., the topology of uniform convergence. A consequence of Assumption \ref{ass:bound} is that the function $\ell_{\pi}(\cdot, \cdot)$ defined by \eqref{eqn:returnfunction} is bounded on ${\mathcal{S}}$ between $\log \frac{1}{M}$ and $\log M$, for any $\pi \in \Theta$.

We say that $\Theta$ is {\it totally bounded} if for any $\epsilon > 0$, there exists $\pi_1, \ldots, \pi_N \in \Theta$ with the following property: for any $\pi \in \Theta$, there exists $1 \leq j \leq N$ such that $\|\pi - \pi_j\|_{\infty} < \varepsilon$. The smallest such $N$ is called the $\epsilon$-{\it covering number} of $\Theta$. Thus $\Theta$ is totally bounded if and only if the covering number is finite for all $\epsilon > 0$. For example, if $\Theta = \{\pi(\cdot) \equiv \pi: \pi \in \overline{\Delta}_n\}$ is the family of constant-weighted portfolios, then $\Theta \cong \overline{\Delta}_n$ is compact and hence is totally bounded. Similar ideas are used in \cite{CSW16} where  certain spaces of Lipschitz portfolio maps are studied.

First we prove a lemma which generalizes \cite[Theorem 3.1]{CB03} to nonparametric families. In this generality it seems that a quantitative bound like \eqref{eqn:universalityCover} is out of reach.

\begin{lemma} \label{lem:universality1}
Suppose the market satisfies Assumption \ref{ass:bound}. Let $\Theta$ be a totally bounded subset of $L^{\infty}(\Delta_n, \overline{\Delta}_n)$ and let $\nu_0$ be any initial distribution on $\Theta$ with full support. Then Cover's portfolio $\widehat{\pi}$ satisfies the universality property \eqref{eqn:universality}.
\end{lemma}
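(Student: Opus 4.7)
\smallskip
\noindent\textbf{Proof plan.} The upper bound $\widehat V(t) \leq V^*(t)$ is immediate since $\widehat V(t)$ is an average of $V_\theta(t)$ against a probability measure, so $\limsup_{t\to\infty} \frac{1}{t}\log \widehat V(t)/V^*(t) \leq 0$. The content of the lemma is the matching lower bound. The plan is to mimic Cover's classical volume-of-near-optimal-portfolios argument, replacing finite-dimensional Lebesgue volume with the $\nu_0$-mass of a small $\|\cdot\|_\infty$-ball, and exploiting total boundedness to reduce the supremum $V^*(t)$ to a maximum over finitely many portfolios.

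\smallskip
The first step is a Lipschitz-in-$\pi$ estimate for the log-return integrand $\ell_\pi$. Under Assumption \ref{ass:bound}, for every $(p,q) \in {\mathcal S}$ each coordinate of $q/p$ lies in $[1/M,M]$, and for any $\pi \in \overline{\Delta}_n$ one has $\pi(p)\cdot q/p \geq 1/M$. Combining these bounds via Cauchy--Schwarz and the mean value theorem for $\log$, one obtains a constant $C=C(M,n)$ such that, whenever $\pi,\pi' \in \Theta$ with $\|\pi-\pi'\|_\infty \leq \epsilon$ (and $\epsilon$ is below a fixed threshold depending only on $M,n$),
$$\sup_{(p,q)\in {\mathcal S}} |\ell_\pi(p,q)-\ell_{\pi'}(p,q)| \leq C\epsilon.$$
Since $\mathbb P_t$ is supported in ${\mathcal S}$ by Assumption \ref{ass:bound}, Lemma \ref{lem:integral} gives $\bigl|\tfrac{1}{t}\log V_\pi(t)-\tfrac{1}{t}\log V_{\pi'}(t)\bigr| \leq C\epsilon$ for every $t\geq 1$, i.e.\ $e^{-Ct\epsilon} V_{\pi'}(t) \leq V_\pi(t) \leq e^{Ct\epsilon} V_{\pi'}(t)$.

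\smallskip
Now fix such an $\epsilon>0$ and choose a finite $\epsilon$-net $\pi_1,\ldots,\pi_N \in \Theta$ by total boundedness. Since $\nu_0$ has full support, $r_j := \nu_0(B(\pi_j,\epsilon))>0$ for every $j$, and thus $r := \min_j r_j > 0$. For each $j$, integrating the Lipschitz estimate over the ball $B(\pi_j,\epsilon)$ yields
$$\widehat V(t) \geq \int_{B(\pi_j,\epsilon)} V_\pi(t)\,d\nu_0(\pi) \geq r\, e^{-Ct\epsilon}\, V_{\pi_j}(t).$$
Taking the maximum over $j$ on the right gives $\widehat V(t) \geq r\, e^{-Ct\epsilon} \max_j V_{\pi_j}(t)$. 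On the other hand, every $\pi \in \Theta$ is within $\epsilon$ of some $\pi_j$, so $V^*(t) \leq e^{Ct\epsilon}\max_j V_{\pi_j}(t)$. Combining the two,
$$\frac{1}{t}\log \frac{\widehat V(t)}{V^*(t)} \;\geq\; \frac{1}{t}\log r - 2C\epsilon.$$
Taking $\liminf_{t\to\infty}$ gives $-2C\epsilon$, and then letting $\epsilon \downarrow 0$ finishes the proof.

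\smallskip
The main obstacle is the uniform Lipschitz-type control on $\ell_\pi$ in $\pi$: this is exactly where Assumption \ref{ass:bound} is essential, since it keeps $\pi(p)\cdot q/p$ bounded away from zero (where $\log$ would fail to be Lipschitz) and keeps $|q/p|$ bounded above. The rest is a packaging of Cover's volume argument: total boundedness supplies the analogue of covering numbers, and the full-support assumption on $\nu_0$ supplies a uniform positive lower bound on the $\nu_0$-mass of small balls around the net. Unlike in the parametric case, the $\epsilon$-covering number $N(\epsilon)$ is allowed to grow arbitrarily fast as $\epsilon \downarrow 0$, which is why this argument yields only the asymptotic statement \eqref{eqn:universality} rather than a quantitative rate in the spirit of \eqref{eqn:universalityCover}.
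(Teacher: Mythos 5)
Your proof is correct and follows essentially the same route as the paper's: a finite $\epsilon$-net from total boundedness, uniform continuity of $\ell_\pi$ in $\pi$ on ${\mathcal S}$ (via Assumption \ref{ass:bound}), and the full-support hypothesis giving positive $\nu_0$-mass to each ball, then integrating over the ball containing a near-optimal portfolio. Your packaging via $\max_j V_{\pi_j}(t)$ and the uniform constant $r=\min_j \nu_0(B(\pi_j,\epsilon))$ is a minor streamlining of the paper's argument, which instead tracks a time-dependent index $j^{[t]}$.
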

\begin{proof}
Since $\widehat{V}(t) \leq V^*(t)$ for all $t$, it suffices to show that $\liminf_{t \rightarrow \infty} \frac{1}{t} \log \frac{\widehat{V}(t)}{V^*(t)} \geq 0$. Let $\epsilon > 0$ be given. Then there exists $\epsilon' > 0$ and portfolios $\pi_1, \ldots \pi_N \in \Theta$ such that the set $\{\pi_j\}_{1 \leq j \leq N}$ are $\epsilon'$-dense in $\Theta$, and whenever $\|\pi - \pi_j\|_{\infty} < \epsilon'$ we have $|\ell_{\pi} - \ell_{\pi_j}| < \epsilon$ on the set ${\mathcal{S}}$ defined by \eqref{eqn:pairstatespace}.

For every $t > 0$, there exists a portfolio $\pi^{[t]} \in \Theta$ such that
\[
\frac{1}{t} \log V_{\pi^{[t]}}(t) > \frac{1}{t} \log V^*(t) - \epsilon,
\]
and from the above construction there exists $1 \leq j^{[t]} \leq N$ such that $\pi^{[t]} \in B_{j^{[t]}} := B(\pi_{j^{[t]}}, \epsilon')$, the open ball in $\Theta$ with radius $\epsilon'$ centered at $\pi_{j^{[t]}}$. Thus
\begin{equation} \label{eqn:approxbest}
\left|\frac{1}{t} \log V_{\pi_{j^{[t]}}}(t) - \frac{1}{t} \log V^*(t) \right| < 2\epsilon.
\end{equation}
Moreover, for all $\pi \in B_{j^{[t]}}$ we have
\[
\left|\frac{1}{t} \log V_{\pi}(t) - \frac{1}{t} \log V_{\pi_{j^{[t]}}}(t) \right| < \epsilon
\]
for all $t$. Exponentiating and combining these inequalities and using the triangle inequality, we have
\begin{equation}  \label{eqn:approxhat}
\begin{split}
\frac{1}{t} \log \widehat{V}(t) &\geq \frac{1}{t} \log \int_{  B_{j^{[t]}} } V_{\pi}(t) \nu_0(d\pi) \\
&= \frac{1}{t} \log \int_{B_{j^{[t]}}} \exp \left(t \cdot \frac{1}{t} \log V_{\pi}(t) \right) \nu_0(d\pi) \\
&\geq \frac{1}{t} \log \int_{ B_{j^{[t]}} } \exp \left(t \cdot \left( \frac{1}{t} \log V^*(t) - 3 \epsilon \right)\right) \nu_0(d\pi) \\
&\geq \frac{1}{t} \log V^*(t) - 3\epsilon + \frac{1}{t} \log \nu_0( B_{j^{[t]}} ) .
\end{split}
\end{equation}
Note that $j^{[t]}$ can only take finitely many values. Since $\nu_0$ has full support, we have
\[
\lim_{t \rightarrow \infty} \frac{1}{t} \log \nu_0( B_{j^{[t]}} ) = 0.
\]
It follows from \eqref{eqn:approxhat} that
\[
\liminf_{t \rightarrow \infty} \frac{1}{t} \log \frac{\widehat{V}(t)}{V^*(t)} \geq -3\epsilon.
\]
The proof is completed by letting $\epsilon \rightarrow 0$.
\end{proof}

Theorem \ref{thm:main1} is a consequence of Lemma \ref{lem:universality1} and the following `uniform strong law of large numbers'. The proof is a standard bracketing argument similar to the proof of Lemma \ref{lem:universality1} and can be found, for example, in \cite[Section 3.1]{G00}.

\begin{lemma} \label{lem:ULLN1}
Under the hypotheses of Theorem \ref{thm:main1}, we have
\[
\lim_{t \rightarrow \infty} \sup_{\pi \in \Theta} \left| \frac{1}{t} \log V_{\pi}(t) - W(\pi) \right| = 0.
\]
\end{lemma}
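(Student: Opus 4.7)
My plan is a standard bracketing argument that combines pointwise convergence of $\frac{1}{t}\log V_{\pi_j}(t) \to W(\pi_j)$ along a finite $\epsilon$-net $\pi_1, \ldots, \pi_N$ in $\Theta$ (available by total boundedness) with an equicontinuity estimate for the family $\{\ell_\pi\}_{\pi \in \Theta}$ in the $\pi$-argument, uniformly over ${\mathcal{S}}$. This parallels the proof of Lemma~\ref{lem:universality1}, but now both upper and lower oscillations of $\frac{1}{t}\log V_\pi(t)$ around the limit must be controlled.

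The first step is the equicontinuity estimate. For $(p,q) \in {\mathcal{S}}$, Assumption~\ref{ass:bound} gives $q_i/p_i \in [1/M, M]$, hence $\pi(p) \cdot q/p \in [1/M, M]$ for every $\pi \in \Theta$. Since $\log$ is Lipschitz with constant $M$ on $[1/M, M]$, and the map $\pi \mapsto \pi(p) \cdot q/p$ is linear in $\pi$ with gradient $q/p$ of Euclidean norm at most $M\sqrt{n}$,
$$|\ell_\pi(p,q) - \ell_{\pi'}(p,q)| \leq M^2 \sqrt{n}\, |\pi(p)-\pi'(p)| \leq M^2 \sqrt{n}\, \|\pi - \pi'\|_\infty.$$
Thus for any $\epsilon > 0$, choosing $\epsilon' = \epsilon/(M^2\sqrt{n})$ yields $\sup_{(p,q)\in{\mathcal{S}}} |\ell_\pi(p,q) - \ell_{\pi'}(p,q)| < \epsilon$ whenever $\|\pi - \pi'\|_\infty < \epsilon'$.

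The second step uses total boundedness to pick an $\epsilon'$-net $\pi_1, \ldots, \pi_N \in \Theta$. Since $N$ is finite and $W(\pi_j) = \lim_t \frac{1}{t}\log V_{\pi_j}(t)$ exists for each $j$, there exists $T$ such that $\max_{1 \leq j \leq N} |\tfrac{1}{t}\log V_{\pi_j}(t) - W(\pi_j)| < \epsilon$ for all $t \geq T$. Given any $\pi \in \Theta$, pick $\pi_j$ with $\|\pi - \pi_j\|_\infty < \epsilon'$. By Lemma~\ref{lem:integral} and the fact that ${\Bbb P}_t$ is supported on ${\mathcal{S}}$,
$$\left|\tfrac{1}{t}\log V_\pi(t) - \tfrac{1}{t}\log V_{\pi_j}(t)\right| \leq \int_{{\mathcal{S}}} |\ell_\pi - \ell_{\pi_j}|\, d{\Bbb P}_t < \epsilon,$$
and sending $t \to \infty$ gives $|W(\pi) - W(\pi_j)| \leq \epsilon$ as well. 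The triangle inequality then yields $|\tfrac{1}{t}\log V_\pi(t) - W(\pi)| < 3\epsilon$ for all $t \geq T$, uniformly in $\pi \in \Theta$; sending $\epsilon \downarrow 0$ finishes the proof. The argument is essentially mechanical once Assumption~\ref{ass:bound} is in hand; the only subtle point is that the $\pi$-modulus of continuity of $\ell_\pi$ in the first step must be uniform over $(p,q) \in {\mathcal{S}}$ rather than just pointwise, which is precisely what Assumption~\ref{ass:bound} delivers by bounding both $q/p$ and $\pi(p)\cdot q/p$ away from $0$ and $\infty$.
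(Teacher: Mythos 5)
Your proof is correct and is precisely the ``standard bracketing argument similar to the proof of Lemma \ref{lem:universality1}'' that the paper itself invokes (citing a reference rather than writing out the details): a finite $\epsilon'$-net from total boundedness, pointwise convergence along the net, and a uniform-over-$\mathcal{S}$ modulus of continuity of $\pi\mapsto\ell_\pi$ supplied by Assumption \ref{ass:bound}. Your explicit Lipschitz constant $M^2\sqrt{n}$ and the passage to the limit $|W(\pi)-W(\pi_j)|\le\epsilon$ are both accurate, so nothing is missing.
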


\medskip

\begin{proof}[Proof of Theorem \ref{thm:main1}]
By assumption, $W(\pi) = \lim_{t \rightarrow \infty} \frac{1}{t} \log V_{\pi}(t)$ exists for all $\pi \in \Theta$. Using the argument of the proof of Lemma \ref{lem:universality1}, it can be shown that
\begin{equation} \label{eqn:commonlimit}
\lim_{t \rightarrow \infty} \frac{1}{t} \log \widehat{V}(t) = \lim_{t \rightarrow \infty} \frac{1}{t} \log V^*(t) = W^*.
\end{equation}

Since
\begin{equation*}
\begin{split}
\frac{1}{t} \log \nu_t(B) &= \frac{1}{t} \log \left(\frac{1}{\widehat{V}(t)} \int_{\Theta} V_{\pi}(t) d\nu_0(\pi)\right) \\
  &= \frac{1}{t} \log \left(  \int_{\Theta} V_{\pi}(t) d\nu_0(\pi)\right) - \frac{1}{t} \log \widehat{V}(t)
\end{split}
\end{equation*}
and thanks to \eqref{eqn:commonlimit}, to prove the LDP it suffices to show that
\begin{equation} \label{eqn:upperbound1}
\limsup_{t \rightarrow \infty} \frac{1}{t} \log \int_F V_{\pi}(t) d\nu_0(\pi) \leq \sup_{\pi \in F} W(\pi)
\end{equation}
for all closed sets $F$ and
\begin{equation} \label{eqn:lowerbound1}
\liminf_{t \rightarrow \infty} \frac{1}{t} \log \int_G V_{\pi}(t) d\nu_0(\pi) \geq \inf_{\pi \in G} W(\pi)
\end{equation}
for all open sets $G$. Indeed, we will show that \eqref{eqn:upperbound1} holds for all measurable sets no matter it is closed or not.

By Lemma \ref{lem:ULLN1}, the quantity $R(t) = \sup_{\pi \in \Theta} \left| \frac{1}{t} \log V_{\pi}(t) - W(\pi) \right|$ converges to $0$ as $t \rightarrow \infty$. To prove the upper bound, write
\begin{equation*}
\begin{split}
\frac{1}{t} \log \int_F V_{\pi}(t) d\nu_0(\pi) &\leq \frac{1}{t} \log \int_F \exp\left(t\left(W(\pi) + R(t)\right)\right) d\nu_0(\pi) \\
   &\leq \sup_{\pi \in F} W(\pi) + R(t) + \frac{1}{t} \nu_0(F).
\end{split}
\end{equation*}
Letting $t \rightarrow \infty$ establishes the upper bound for all measurable sets. The lower bound for open sets can be proved in a similar manner using the fact that $\nu_0$ has full support.
\end{proof}

\begin{proof} [Proof of Theorem \ref{thm:LDPfinite}]
Since $E$ is finite, $\Theta$ is a totally bounded family of functions from $E$ to $\overline{\Delta}_n$. (It can be extended from $E$ to $\Delta_n$ by setting $\pi(p) = \pi_0$ for $p \notin E$, where $\pi_0$ is a fixed element of $\overline{\Delta}_n$.) The first statement then follows from Lemma \ref{lem:universality1}. Moreover, by Lemma \ref{lem:logoptimal} the limit $W(\pi) = \lim_{t \rightarrow \infty} \int_{E \times E} \ell_{\pi} d \mathbb{P}_t$ exists and equals $\int_{E \times E} \ell_{\pi} d \mathbb{P}$ for all $\pi \in \Theta$. Thus Theorem \ref{thm:main1} applies. It is easy to see that $I(\pi)$ is convex in $\pi$.
\end{proof}

\subsection{An example} \label{sec:example}
Theorem \ref{thm:main1} assumes that the family is totally bounded in the supremum metric and the asymptotic growth rates of all portfolios exist. Now we give a simple example to show what might go wrong. First, if the family is too large and the topology is not chosen appropriately, universality may fail. Second, the LDP may be trivial even if there is an optimal portfolio.

Consider a market with two stocks (so $n = 2$). Assume that the market weight takes values in the countable set
\[
E = \{p = (p_1, p_2) \in \Delta_2: p_1, p_2 \text{ rational}\}.
\]
Let $\Theta = \left(\overline{\Delta}_2\right)^E$ be the set of portfolio maps on $E$ and equip $\Theta$ with the topology of {\it pointwise convergence}. Let the initial distribution $\nu_0$ be the infinite product of the uniform distribution on $\overline{\Delta}_2$. That is, if $\pi$ is chosen randomly from $\Theta$ according to the distribution $\nu_0$, then for any $p^{(1)}, \ldots, p^{(k)} \in E$ the portfolio vectors $\pi(p^{(j)})$ are i.i.d.~uniform in $\Delta_2$. It is easy to verify that $\nu_0$ has full support on $\Theta$.

Let $\delta > 0$ be a rational number and consider the path $\{\mu(t)\}_{t \geq 0}$ in $E$ defined recursively by
\begin{equation} \label{eqn:examplemarket}
\mu(0) = \left(\frac{1}{2}, \frac{1}{2}\right), \quad \mu(t + 1) = \left(\frac{\mu_1(t)}{1 + \delta \mu_2(t)}, \frac{(1 + \delta) \mu_2(t)}{1 + \delta \mu_2(t)}\right).
\end{equation}
Note that
\begin{equation} \label{eqn:secondstockbetter}
\frac{\mu_2(t + 1)}{\mu_2(t)} = (1 + \delta) \frac{\mu_1(t + 1)}{\mu_1(t)}
\end{equation}
for all $t \geq 0$ and it can be verified directly that $\{\mu(t)\}_{t \geq 0}$ satisfies Assumption \ref{ass:bound} with $M = 1 + \delta$.

From \eqref{eqn:secondstockbetter}, it is clear that any optimal portfolio $\pi$ up to time $t$ satisfies $\pi(\mu(s)) = (0, 1)$ for all $0 \leq s \leq t - 1$. It follows that
\[
V^*(t) = \max_{\pi \in \Theta} V_{\pi}(t) = \frac{\mu_2(t)}{\mu_2(0)}, \quad t > 0.
\]

\begin{proposition} \label{lemma:nouniverality}
For the market weight path given by \eqref{eqn:examplemarket}, Cover's portfolio $\widehat{\pi}$ satisfies
\[
\widehat{V}(t) = \frac{\mu_2(t)}{\mu_2(0)} \left(1 - \frac{1}{2} \frac{\delta}{1 + \delta} \right)^t.
\]
In particular, we have
\[
\lim_{t \rightarrow \infty} \frac{1}{t} \log \frac{\widehat{V}(t)}{V^*(t)} = \log \left( 1 - \frac{1}{2} \frac{\delta}{1 + \delta}\right) < 0.
\]
Thus Cover's portfolio does not satisfy the universality property \eqref{eqn:universality} for all market weight paths satisfying Assumption \ref{ass:bound}.
\end{proposition}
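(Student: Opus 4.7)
The plan is to exploit the product structure of $\nu_0$: under $\nu_0$ the portfolio evaluations at distinct points of $E$ are independent and uniform on $\overline{\Delta}_2$, so Fubini converts the defining integral \eqref{eqn:universalportfolio} for $\widehat{V}(t)$ into a product of one-dimensional integrals that can be computed by hand.

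First I would verify that the market visits a new point of $E$ at every time step, which is what makes the independence applicable. From the recursion \eqref{eqn:examplemarket}, $\mu_2(t+1)/\mu_2(t) = (1+\delta)/(1+\delta \mu_2(t)) > 1$ because $\mu_2(t) \in (0,1)$, so $\mu_2(t)$ is strictly increasing and the points $\mu(0), \mu(1), \mu(2), \ldots$ are pairwise distinct elements of $E$. Since $\nu_0$ is the infinite product of uniform distributions on $\overline{\Delta}_2$, writing $\pi(\mu(s)) = (X_s, 1 - X_s)$ makes $\{X_s\}_{s \geq 0}$ an i.i.d.\ sequence of $\mathrm{Uniform}[0,1]$ random variables under $\nu_0$.

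Second, I would expand the per-step relative return. A direct computation from \eqref{eqn:examplemarket} gives
\[
\pi(\mu(s)) \cdot \frac{\mu(s+1)}{\mu(s)} = \frac{X_s + (1-X_s)(1+\delta)}{1+\delta \mu_2(s)} = \frac{1 + \delta(1 - X_s)}{1 + \delta \mu_2(s)}.
\]
Multiplying over $s = 0, \ldots, t-1$ and integrating against $\nu_0$, Fubini together with the factorization from the first step yields
\[
\widehat{V}(t) = \prod_{s=0}^{t-1} \frac{\int_0^1 \bigl(1 + \delta(1-x)\bigr) \dd x}{1 + \delta \mu_2(s)} = \prod_{s=0}^{t-1} \frac{1 + \delta/2}{1 + \delta \mu_2(s)}.
\]
The identity $V^*(t) = \mu_2(t)/\mu_2(0) = \prod_{s=0}^{t-1} \frac{1+\delta}{1+\delta \mu_2(s)}$, noted just before the statement, then gives
\[
\frac{\widehat{V}(t)}{V^*(t)} = \left(\frac{1+\delta/2}{1+\delta}\right)^t = \left(1 - \frac{1}{2} \frac{\delta}{1+\delta}\right)^t,
\]
from which the closed form for $\widehat{V}(t)$ and the logarithmic asymptotic both follow at once.

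The main (and essentially only non-routine) obstacle is the independence step: once one recognizes that the market path visits a new rational point of $E$ at every time so that the product structure of $\nu_0$ decouples the integrand, everything reduces to computing the one-dimensional expectation of $1 + \delta(1-X)$ for $X \sim \mathrm{Uniform}[0,1]$ and a short algebraic simplification.
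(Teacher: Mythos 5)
Your proof is correct and follows essentially the same route as the paper: expand $V_{\pi}(t)$ as a product over time steps, use the product structure of $\nu_0$ to factor the integral, and compute the one-dimensional uniform expectation. The only difference is cosmetic bookkeeping (you keep the denominators $1+\delta\mu_2(s)$ explicit rather than factoring out $\mu_2(s+1)/\mu_2(s)$ first), and your explicit check that $\mu_2(t)$ is strictly increasing — so the points $\mu(0),\mu(1),\ldots$ are distinct and independence genuinely applies — is a worthwhile detail the paper leaves implicit.
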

\begin{proof}
Given a portfolio $\pi \in \Theta$, we have
\[
V_{\pi}(t) = \prod_{s = 0}^{t - 1} \left( \pi_1(\mu(s)) \frac{\mu_1(s + 1)}{\mu_1(s)} + \pi_2(\mu(s)) \frac{\mu_2(s + 1)}{\mu_2(s)}\right).
\]
By \eqref{eqn:secondstockbetter}, we can write
\begin{equation*}
\begin{split}
V_{\pi}(t) &= \prod_{s = 0}^{t - 1} \left( \frac{\mu_2(s + 1)}{\mu_2(s)}  \left( \frac{1}{1 + \delta} \pi_1(\mu(s)) + \pi_2(\mu(s)) \right)\right) \\
  &= \frac{\mu_2(t)}{\mu_2(0)} \prod_{s = 0}^{t - 1} \left(1 -   \left( 1- \pi_2(\mu(s))\right) \frac{\delta}{1 + \delta} \right).
\end{split}
\end{equation*}
The value of Cover's portfolio is
\begin{equation*}
\begin{split}
\widehat{V}(t) = \frac{\mu_2(t)}{\mu_2(0)} \int_{\Theta} \prod_{s = 0}^{t - 1}  \left(1 -   \left( 1- \pi_2(\mu(s))\right) \frac{\delta}{1 + \delta} \right) d\nu_0(\pi).
\end{split}
\end{equation*}
Since $\nu_0$ is the infinite product of uniform distributions, by independence we have
\[
\widehat{V}(t) = \frac{\mu_2(t)}{\mu_2(0)} \left(1 - \frac{1}{2} \frac{\delta}{1 + \delta}\right)^t. \qedhere
\]
\end{proof}

\begin{proposition}
For the market weight path given by \eqref{eqn:examplemarket}, the wealth distributions $\{\nu_t\}_{t = 0}^{\infty}$ satisfies LDP on $\Theta$ with the trivial rate function $I(\pi) \equiv 0$.
\end{proposition}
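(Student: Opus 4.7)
The plan is to establish the upper and lower bounds of the LDP separately, with the upper bound being trivial and the lower bound exploiting the product structure of $\nu_0$ together with the fact that the pointwise-convergence topology has very small basic open sets.

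For the upper bound, since $\nu_t(F) \leq 1$ for any Borel set $F$, we automatically have $\limsup_t \frac{1}{t}\log \nu_t(F) \leq 0 = -\inf_{\pi \in F} I(\pi)$ (using $I \equiv 0$), so this direction requires nothing further. The entire content of the claim lies in the lower bound: for every nonempty open $G \subset \Theta$, we must show $\liminf_t \frac{1}{t}\log \nu_t(G) \geq 0$.

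For the lower bound, I would begin by choosing a basic open neighborhood $G_0 \subset G$ of the pointwise topology, which by definition only constrains the values $\pi(p^{(1)}), \ldots, \pi(p^{(k)})$ at finitely many points $p^{(1)}, \ldots, p^{(k)} \in E$. The two crucial observations about the path \eqref{eqn:examplemarket} are: (a) $\mu_2(t)$ is strictly increasing (since $\mu_2(t+1)/\mu_2(t) = (1+\delta)/(1+\delta \mu_2(t)) > 1$), so the market path is injective and visits each $p^{(j)}$ at most once; hence the set $T := \{s \geq 0 : \mu(s) \in \{p^{(1)}, \ldots, p^{(k)}\}\}$ is finite with $|T| \leq k$; and (b) $\nu_0$ is a product measure, so conditioning on $\pi \in G_0$ only affects the coordinates indexed by $\{p^{(1)}, \ldots, p^{(k)}\}$ and leaves the remaining coordinates jointly independent and uniform. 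Using the representation established in the proof of Proposition \ref{lemma:nouniverality},
\[
V_{\pi}(t) = \frac{\mu_2(t)}{\mu_2(0)} \prod_{s = 0}^{t-1} \left(1 - (1 - \pi_2(\mu(s))) \frac{\delta}{1+\delta}\right),
\]
I would bound the factors for $s \in T$ from below by $\frac{1}{1+\delta}$ (the minimum value, attained at $\pi_2 = 0$), and compute the integral over the remaining factors by independence exactly as in the proof of Proposition \ref{lemma:nouniverality}, yielding
\[
\int_{G_0} V_{\pi}(t) \, d\nu_0(\pi) \geq \nu_0(G_0) \cdot \frac{\mu_2(t)}{\mu_2(0)} \cdot \frac{1}{(1+\delta)^{|T|}} \left(1 - \frac{1}{2}\frac{\delta}{1+\delta}\right)^{t - |T|}.
\]

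Finally, I would divide by $\widehat{V}(t) = \frac{\mu_2(t)}{\mu_2(0)}\left(1 - \frac{1}{2}\frac{\delta}{1+\delta}\right)^t$ from Proposition \ref{lemma:nouniverality}, take $\frac{1}{t}\log$, and let $t \to \infty$. The factors $\mu_2(t)/\mu_2(0)$ cancel, the terms $\frac{1}{t}\log \nu_0(G_0)$ and $\frac{|T|}{t}\log(1+\delta)$ vanish since $\nu_0(G_0) > 0$ (full support) and $|T|$ is constant, and the remaining exponential rates also cancel. This gives $\liminf_t \frac{1}{t}\log \nu_t(G_0) \geq 0$, hence the same bound for $G \supset G_0$. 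The main obstacle, though a mild one, is the careful bookkeeping that the independence of coordinates outside $\{p^{(1)}, \ldots, p^{(k)}\}$ is preserved under the conditioning $\pi \in G_0$; here the product structure of $\nu_0$ is essential, and this is why the very weak topology of pointwise convergence produces a trivial rate function despite the presence of a clearly optimal portfolio.
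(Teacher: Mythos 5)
Your proof is correct and follows essentially the same route as the paper: pass to a cylinder subset of $G$, use the product structure of $\nu_0$ to integrate out the unconstrained coordinates, and cancel against $\widehat{V}(t)$. The only difference is that you make explicit a detail the paper leaves implicit, namely that the injectivity of the path (since $\mu_2(t)$ is strictly increasing) guarantees the constrained coordinates are visited only finitely often, so their contribution is $O(1)$ in the exponent.
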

\begin{proof}
Let $G$ be any open set of $\Theta$. Then $G$ contains a cylinder set of the form
\begin{equation} \label{eqn:cylinder}
C = \left\{(\pi(p_1), \ldots, \pi(p_{\ell})) \in B\right\},
\end{equation}
where $p_1, \ldots, p_{\ell} \in E$ and $B$ is an open subset of $\left(\overline{\Delta}_2\right)^{\ell}$. It follows that
\[
\nu_t(G) \geq \frac{1}{ \left(1 - \frac{1}{2} \frac{\delta}{1 + \delta}\right)^t} \int_C \prod_{s = 0}^{t - 1}   \left(1 -   \left( 1- \pi_2(\mu(s))\right) \frac{\delta}{1 + \delta} \right) d \nu_0(\pi).
\]
Using the fact that $C$ puts restrictions on only finitely many coordinates, we have
\[
\lim_{t \rightarrow \infty} \frac{1}{t} \log \int_C \prod_{s = 0}^{t - 1}   \left(1 -   \left( 1- \pi_2(\mu(s))\right) \frac{\delta}{1 + \delta} \right) d \nu_0(\pi) = \log \left( 1 - \frac{1}{2} \frac{\delta}{1 + \delta}\right).
\]
Thus $\liminf_{t \rightarrow \infty} \frac{1}{t} \log \nu_t(G) \geq 0$ and $\lim_{t \rightarrow \infty} \frac{1}{t} \log \nu_t(G) = 0$. Since the upper bound holds trivially, the LDP is proved.
\end{proof}

\section{Functionally generated portfolios} \label{sec:fgp}
This section is devoted to proving Theorem \ref{thm:main2} for functionally generated portfolios. As in Section \ref{sec:discrete} we impose Assumption \ref{ass:bound} on the market weight sequence $\{\mu(t)\}_{t = 0}^{\infty}$. We begin by stating some properties of functionally generated portfolios introduced in Section \ref{sec:summary}. For convex analytic concepts a standard reference is \cite{R70}.

\subsection{Functionally generated portfolios}
First we give the convex analytic interpretation of the defining inequality \eqref{eqn:fgpineq}. Let $\Phi: \Delta_n \rightarrow {\Bbb R}$ be a concave function. The {\it superdifferential} $\partial \Phi(p)$ of $\Phi$ at $p \in \Delta_n$ is the convex set of all vectors $\xi \in {\Bbb R}^n$ satisfying $\sum_{i = 1}^n \xi_i = 0$ (i.e., $\xi$ is tangent to $\Delta_n$) and
\[
\Phi(p) + \langle \xi, q - p \rangle \geq \Phi(q)
\]
for all $q \in \Delta_n$. The elements of $\partial \Phi(p)$ are called {\it supergradients} of $\Phi$ at $p$. Note that if $\Phi$ is positive and concave, $\log \Phi$ is also a concave function.

\begin{lemma} \cite[Proposition 6]{PW14} \label{lem:FGPsuperdiff} {\ }
\begin{enumerate}
\item[(i)] Suppose $\pi$ is generated by $\Phi$. For every $p \in \Delta_n$, the tangent vector $v = (v_1, \ldots, v_n)$ of $\Delta_n$ given by
\[
v_i = \frac{\pi_i(p)}{p_i} - \frac{1}{n} \sum_{j = 1}^n \frac{\pi_j(p)}{p_j}
\]
is an element of $\partial \log \Phi(p)$, the superdifferential of $\log \Phi$ at $p$.
\item[(ii)] Conversely, suppose $\Phi$ is a positive concave function on $\Delta_n$. For each $p \in \Delta_n$, let $v(p)$ be an element of $\partial \log \Phi(p)$ and define $\pi(p)$ by
\[
\pi_i(p) = p_i \left(v_i(p) + 1 - \sum_{j = 1}^n p_j v_j(p) \right).
\]
Then $\pi$ is a map from $\Delta_n$ to $\overline{\Delta}_n$ and is a portfolio generated by $\Phi$. (By \cite[Theorem 14.56]{RW98}, there exists a measurable selection of $\partial \log \Phi$.)
\end{enumerate}
\end{lemma}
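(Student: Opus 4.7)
The lemma turns on two short computations. Using $\sum_i v_i = 0$ (immediate from the centering in the definition of $v$, which also supplies the tangency of $v$ to $\Delta_n$) together with $\sum_i p_i = \sum_i q_i = 1$, one finds for (i) that
\[
\langle v, q - p\rangle = \sum_i \frac{\pi_i(p)}{p_i}(q_i - p_i) = \pi(p) \cdot \frac{q}{p} - 1.
\]
Combining the generating inequality \eqref{eqn:fgpineq} with the elementary bound $x - 1 \geq \log x$ for $x > 0$, applied at $x = \Phi(q)/\Phi(p)$, then yields $\langle v, q-p\rangle \geq \log \Phi(q) - \log \Phi(p)$, which is exactly $v \in \partial \log \Phi(p)$.

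For (ii), the analogous calculation produces $\pi(p) \cdot q/p = 1 + \langle v(p), q - p\rangle$, and $\sum_i \pi_i(p) = 1$ drops out after a one-line rearrangement. The heart of (ii) is to promote the log-supergradient bound to a linear upper bound on $\Phi$ itself: from $v \in \partial \log \Phi(p)$ alone one only gets $\Phi(q) \leq \Phi(p)\,e^{\langle v, q-p\rangle}$, whereas \eqref{eqn:fgpineq} is equivalent to $\Phi(q) \leq \Phi(p)(1 + \langle v, q-p\rangle)$. To close this gap I would exploit the concavity of $\Phi$ along the chord $p_t := (1-t)p + tq$, which together with the log-supergradient bound evaluated at $p_t$ gives
\[
(1-t)\Phi(p) + t\Phi(q) \;\leq\; \Phi(p_t) \;\leq\; \Phi(p)\exp\!\bigl(t\langle v, q-p\rangle\bigr).
\]
Subtracting $\Phi(p)$, dividing by $t$, and letting $t \to 0^+$ collapses $e^{ta}$ to $1 + ta$, giving the required linear bound and hence \eqref{eqn:fgpineq}. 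Nonnegativity of $\pi_i(p) = p_i(1 + v_i - \langle p, v\rangle)$ then follows by applying \eqref{eqn:fgpineq} with $q \to e_i$ inside $\Delta_n$: the right-hand side $\Phi(q)/\Phi(p)$ stays nonnegative while the left-hand side tends to $\pi_i(p)/p_i$. Measurability of $p \mapsto \pi(p)$ is inherited from the measurable selection $p \mapsto v(p)$ supplied by \cite[Theorem 14.56]{RW98}.

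The main obstacle is precisely this passage from log-supergradient to linear bound in (ii), since $1 + x < e^x$ for $x \neq 0$ makes the naive implication fail. Concavity of $\log \Phi$ alone is strictly weaker than what \eqref{eqn:fgpineq} demands; the chord-and-limit argument above is what genuinely uses the stronger hypothesis that $\Phi$ itself, not merely $\log \Phi$, is concave. Part (i), by contrast, only needs the universal inequality $\log x \leq x - 1$ and is essentially a one-line consequence of the identity $\langle v, q-p\rangle = \pi(p) \cdot q/p - 1$.
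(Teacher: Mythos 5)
The paper does not prove this lemma at all --- it is imported verbatim as \cite[Proposition 6]{PW14} --- so there is no in-paper argument to compare against. Your proof is correct and self-contained. The identity $\langle v, q-p\rangle = \pi(p)\cdot\frac{q}{p} - 1$ (valid because the centering constant is annihilated by $\sum_i(q_i-p_i)=0$) reduces (i) to $x-1\geq\log x$, and reduces (ii) to showing $\Phi(q)\leq\Phi(p)\bigl(1+\langle v,q-p\rangle\bigr)$, which, as you rightly stress, does \emph{not} follow from $v\in\partial\log\Phi(p)$ alone since that only gives the exponential bound. Your chord argument --- sandwiching $\Phi(p_t)$ between the concavity lower bound $(1-t)\Phi(p)+t\Phi(q)$ and the exponential upper bound $\Phi(p)e^{t\langle v,q-p\rangle}$, then differentiating at $t=0^+$ --- is exactly the right way to exploit concavity of $\Phi$ itself rather than merely of $\log\Phi$; it amounts to the standard fact that $\Phi(p)\,\partial\log\Phi(p)\subset\partial\Phi(p)$, proved from scratch. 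The nonnegativity of $\pi_i(p)$ via $q\to e_i$ and the verification $\sum_i\pi_i(p)=1$ are both sound, and measurability is correctly delegated to the cited selection theorem. No gaps.
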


If $\Phi$ is not differentiable at $p$, the superdifferential $\partial \log \Phi(p)$ is an infinite set, and by Lemma \ref{lem:FGPsuperdiff} there are multiple ways to choose a portfolio generated by $\Phi$. Nevertheless, it is well known that a finite-valued concave function on $\Delta_n$ is differentiable almost everywhere on $\Delta_n$, so the portfolio maps generated by $\Phi$ agree almost everywhere on $\Delta_n$. Note, however, that the null set depends on $\Phi$. In general, a functionally generated portfolio $\pi: \Delta_n \rightarrow \overline{\Delta}_n$ is not continuous on $\Delta_n$.

Let ${\mathcal{FG}} \subset L^{\infty}\left(\Delta_n, \overline{\Delta}_n\right)$ be the family of all functionally generated portfolios $\pi: \Delta_n \rightarrow \overline{\Delta}_n$. It is known that ${\mathcal{FG}}$ is convex. Indeed, if $\pi$ is generated by $\Phi$ and $\eta$ is generated by $\Psi$, then for any $\lambda \in (0, 1)$ the portfolio $\lambda \pi + (1 - \lambda) \eta$ (a constant-weighted portfolio of $\pi$ and $\eta$ is generated by the geometric mean $\Phi^{\lambda} \Psi^{1 - \lambda}$. We endow ${\mathcal{FG}}$ with the topology of uniform convergence. The following lemma shows that the current setting is not covered by Theorem \ref{thm:main1}.

\begin{lemma}
${\mathcal{FG}}$ is not totally bounded. In fact, ${\mathcal{FG}}$ is not separable.
\end{lemma}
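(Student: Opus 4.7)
The plan is to establish the stronger statement that $\mathcal{FG}$ is not separable and then invoke the standard fact that in a metric space total boundedness implies separability (take a countable union of finite $1/k$-nets). To this end I will exhibit an uncountable subfamily of $\mathcal{FG}$ whose members are pairwise uniformly separated in the supremum metric, which cannot sit inside any separable subset.

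For each $c \in (0,1)$, I would define a ``tent'' generating function on $\Delta_n$ by
\[
\Phi_c(p) = \min\!\left(\frac{p_1}{c},\, \frac{p_2}{1-c}\right).
\]
As the minimum of two strictly positive linear forms, $\Phi_c$ is positive and concave on $\Delta_n$, so by Lemma \ref{lem:FGPsuperdiff}(ii) it generates a functionally generated portfolio $\pi_c$. On the open region $R_c^- = \{p \in \Delta_n : p_1/p_2 < c/(1-c)\}$ the generator reduces to $p_1/c$, and $\log \Phi_c = \log p_1 - \log c$ has ambient gradient $(1/p_1, 0, \ldots, 0)$; the formula in Lemma \ref{lem:FGPsuperdiff}(ii) is invariant under adding multiples of $\mathbf{1}$ to $v$, and with this choice $\sum_j p_j v_j = 1$, so the formula collapses to $\pi_c(p) = e_1$. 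Symmetrically, on $R_c^+ = \{p_1/p_2 > c/(1-c)\}$ I expect $\pi_c(p) = e_2$. The ``kink'' set is codimension one and the specific measurable selection from $\partial \log \Phi_c$ there is immaterial to the rest of the argument.

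For $c < c'$ in $(0,1)$, monotonicity of $x \mapsto x/(1-x)$ produces a nonempty open slab
\[
U = \left\{ p \in \Delta_n : \frac{c}{1-c} < \frac{p_1}{p_2} < \frac{c'}{1-c'} \right\} \subset R_c^+ \cap R_{c'}^-,
\]
on which $\pi_c \equiv e_2$ while $\pi_{c'} \equiv e_1$, giving $|\pi_c(p) - \pi_{c'}(p)| = \sqrt{2}$ at every point of $U$ and hence $\|\pi_c - \pi_{c'}\|_\infty \geq \sqrt{2}$. Thus $c \mapsto \pi_c$ is an injection of an uncountable set into $\mathcal{FG}$ with all pairwise sup-distances at least $\sqrt{2}$, which is incompatible with the existence of a countable dense subset.

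The main content is selecting the right generating functions: the tent form $\Phi_c$ concentrates each $\pi_c$ onto a single vertex of $\overline{\Delta}_n$ on either side of a single threshold, so distinct thresholds automatically produce portfolios that disagree by a full edge of $\overline{\Delta}_n$ on an open region. The remaining verifications (positivity and concavity of $\Phi_c$, the portfolio formula collapsing to $e_1$ or $e_2$, and nonemptiness of $U$) are short and pose no real obstacle.
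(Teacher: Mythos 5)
Your proof is correct and follows essentially the same route as the paper: the tent generator $\Phi_c(p)=\min\left(p_1/c,\,p_2/(1-c)\right)$ is exactly the paper's piecewise affine $\Phi_\theta$ (peaking at $p_1=\theta=c$ and vanishing at the two vertices), and the resulting threshold portfolios form an uncountable uniformly separated family. The only differences are cosmetic — you derive the portfolio explicitly from Lemma \ref{lem:FGPsuperdiff}(ii) where the paper asserts it, and you work directly in $\Delta_n$ where the paper does $n=2$ and remarks on the general case.
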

\begin{proof}
We give an example for $n = 2$ and similar considerations can be applied to all dimensions. For each $\theta \in (0, 1)$, let $\pi_{\theta}: \Delta_2 \rightarrow \overline{\Delta}_2$ be the portfolio
\[
\pi_{\theta}(p) =
\begin{cases} (1, 0) &\mbox{if } p_1 \leq \theta \\
(0, 1) & \mbox{if } p_1 > \theta. \end{cases}
\]
It is easy to verify that $\pi_{\theta}$ is functionally generated, and the generating function is the smallest piecewise affine function $\Phi_{\theta}$ on $\Delta_2$ satisfying $\Phi_{\theta}((0, 1)) = \Phi_{\theta}((1, 0)) = 0$ and $\Phi_{\theta}(\theta, 1 - \theta) = 1$. Since $\left\{\pi_{\theta}\right\}_{\theta \in (0, 1)}$ forms an uncountable discrete set in ${\mathcal{FG}}$, ${\mathcal{FG}}$ is not separable.
\end{proof}

Although the portfolio maps  $\pi: \Delta_n \rightarrow \overline{\Delta}_n$ are the primary objects, it is technically more convenient to work with their generating functions.

\begin{definition} \label{def:metric}
Let ${\mathcal{C}}_0$ be the set of all positive concave functions $\Phi$ on $\Delta_n$ satisfying the normalization $\Phi\left(\overline{e}\right) = 1$, where $\overline{e} = \left(\frac{1}{n}, \ldots, \frac{1}{n}\right)$ is the barycenter of $\overline{\Delta}_n$. We endow ${\mathcal{C}}_0$ with the topology of local uniform convergence. We define a metric $d$ on ${\mathcal{C}}_0$ as follows. For $m  = 1, 2, .\ldots$, let $K_m = \left\{p \in \Delta_n: p_i \geq \frac{1}{m}, 1 \leq i \leq n\right\}$. Then $\{K_m\}_{m = 1}^{\infty}$ is a compact exhaustion of $\Delta_n$. For $\Phi, \Psi \in {\mathcal{C}}_0$ we define
\[
d(\Phi, \Psi) = \sum_{m = 1}^{\infty} 2^{-m} \frac{\max_{p \in K_m} |\Phi(p) - \Psi(p)|}{1 + \max_{p \in K_m} |\Phi(p) - \Psi(p)|}.
\]
\end{definition}

By \cite[Proposition 6]{PW14} the generating function of a functionally generated portfolio is unique up to a positive multiplicative constant. Thus by a normalization we may assume without loss of generality that ${\mathcal{C}}_0$ is the set of generating functions.

\begin{lemma} \label{lem:compact}
$\left({\mathcal{C}}_0, d\right)$ is a compact metric space.
\end{lemma}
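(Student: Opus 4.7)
The plan is to recognize $(\mathcal{C}_0, d)$ as a closed, precompact subset of the Fréchet space of continuous functions on $\Delta_n$ with the local uniform topology, and apply a diagonal Arzelà--Ascoli argument. First I would check that $d$ is a genuine metric (the inner weights are strictly increasing in the underlying seminorms, the series converges, and $d(\Phi,\Psi) = 0$ forces $\Phi = \Psi$ on every $K_m$, hence on $\Delta_n$) and that $d$-convergence coincides with uniform convergence on each $K_m$. Compactness for this topology is equivalent to showing that every sequence $\{\Phi_k\} \subset \mathcal{C}_0$ has a subsequence converging uniformly on every $K_m$ to some limit in $\mathcal{C}_0$.

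The heart of the argument is a pair of uniform bounds on each $K_m$. For the upper bound, given $p \in K_m$, extend the ray from $p$ through the barycenter $\overline{e}$ until it meets the boundary of $\overline{\Delta}_n$ at a point $q$, so that $\overline{e} = \alpha p + (1-\alpha) q$ for some $\alpha = \alpha(p) \in (0,1)$. A short geometric computation in the simplex shows that $\alpha$ is bounded below by a constant $\alpha_m > 0$ depending only on $m$, uniformly in $p \in K_m$. Concavity of $\Phi$ together with $\Phi > 0$ then gives $1 = \Phi(\overline{e}) \geq \alpha \Phi(p)$, so $\Phi(p) \leq 1/\alpha_m$. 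For the lower bound, write instead $p = (1-t)\overline{e} + t s$ with $s$ on the boundary segment through $\overline{e}$ and $p$, and note that $t \leq 1 - c_m$ uniformly for $p \in K_m$; concavity and positivity of $\Phi$ give $\Phi(p) \geq (1-t) \Phi(\overline{e}) \geq c_m > 0$. These bounds are uniform across $\mathcal{C}_0$.

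Given the uniform upper bound on $K_{m+1}$ and the fact that $K_m$ lies a positive distance inside $K_{m+1}$ (in $\Delta_n$), I would invoke the standard fact that a concave function on a convex open set, bounded above on a slightly larger compact set, is Lipschitz on the smaller one, with Lipschitz constant depending only on the upper bound and the two sets. This gives equicontinuity of $\mathcal{C}_0$ on each $K_m$. Arzelà--Ascoli then extracts a subsequence converging uniformly on $K_1$; a standard Cantor diagonal procedure produces a single subsequence converging uniformly on every $K_m$, hence in the metric $d$, to some continuous function $\Phi$.

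Finally, I would verify $\Phi \in \mathcal{C}_0$: concavity passes to pointwise limits, the normalization $\Phi(\overline{e}) = 1$ is preserved since $\overline{e} \in K_n$ for $n \geq 1$, and strict positivity follows from the uniform lower bound $\Phi_k(p) \geq c_m$ on $K_m$, which is inherited by $\Phi$ and covers all of $\Delta_n = \bigcup_m K_m$. The main technical obstacle is the pair of uniform estimates above; once those are in hand, the Arzelà--Ascoli step and the closedness step are routine.
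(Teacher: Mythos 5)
Your proposal is correct. Note that the paper does not prove this lemma itself but defers to \cite[Lemma 10]{W14}, and the argument you give (uniform two-sided bounds on each $K_m$ from concavity plus the normalization $\Phi(\overline{e})=1$, local Lipschitz equicontinuity of concave functions with a uniform bound, Arzel\`a--Ascoli with a diagonal extraction, and closedness of $\mathcal{C}_0$ under local uniform limits) is exactly the standard route that such a reference takes; there is no genuinely different method here. Two small points of hygiene: the auxiliary points $q$ and $s$ that you place ``on the boundary'' of $\overline{\Delta}_n$ are not in the domain of $\Phi$, which is the open simplex, so you should either take them strictly inside (e.g.\ fix $\alpha = \tfrac{1}{2n}$, which already yields the global bound $\Phi \le 2n$ on all of $\Delta_n$, and take $s$ on the open segment approaching the boundary, which for $p \in K_m$ gives $\Phi(p) \ge n/m$) or pass to a limit along the segment; and the Lipschitz lemma for concave functions requires a two-sided bound on the larger set, not just an upper bound, but your first step supplies both, so nothing is missing.
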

\begin{proof}
See \cite[Lemma 10]{W14}.
\end{proof}

Although ${\mathcal{FG}}$ is not totally bounded, by Lemma \ref{lem:FGPsuperdiff} and Lemma \ref{lem:compact} it is `almost the same' as ${\mathcal{C}}_0$ which is a compact metric space. This allows us to show under appropriate conditions that $V_{\pi}(t)$ behaves nicely as a function of $\pi$ when $t$ is large. Here is an application of the compactness of ${\mathcal{C}}_0$.

\begin{lemma}
For each $t \geq 0$, there exists $\pi^* \in \Theta$ such that $V_{\pi^*}(t) = \sup_{\pi \in {\mathcal{FG}}} V_{\pi}(t)$.
\end{lemma}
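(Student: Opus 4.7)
The plan is to exploit the compactness of $\mathcal{C}_0$ established in Lemma \ref{lem:compact} together with the fact that, for a fixed time $t$, the quantity $V_\pi(t)$ depends on $\pi$ only through the finitely many portfolio vectors $\pi(\mu(0)),\ldots,\pi(\mu(t-1))$. Concretely, I would take a maximizing sequence $\pi_k\in\mathcal{FG}$ with generating functions $\Phi_k\in\mathcal{C}_0$ (normalized so that $\Phi_k(\overline{e})=1$, which is permissible since the generating function is unique up to a positive scalar) and extract, by Lemma \ref{lem:compact}, a subsequence along which $\Phi_k\to\Phi^*$ in the metric $d$, so $\Phi_k\to\Phi^*$ locally uniformly on $\Delta_n$, with $\Phi^*\in\mathcal{C}_0$ positive. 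Passing to a further subsequence using compactness of $\overline{\Delta}_n$, I may assume $\pi_k(\mu(s))\to y_s^*\in\overline{\Delta}_n$ for each $s=0,\ldots,t-1$.

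Next I would show that the limit vectors $y_s^*$ are admissible choices for a portfolio generated by $\Phi^*$ at the points $\mu(s)$. Since $\mu(s)\in\Delta_n$ and $\Phi^*(\mu(s))>0$, passing to the limit in the generating inequality \eqref{eqn:fgpineq} for $\pi_k$ and $\Phi_k$ gives
\[
\sum_{i=1}^n y_{s,i}^*\,\frac{q_i}{\mu_i(s)} \;\geq\; \frac{\Phi^*(q)}{\Phi^*(\mu(s))} \qquad\text{for every } q\in\Delta_n,
\]
which, via Lemma \ref{lem:FGPsuperdiff}(i), is equivalent to saying that the tangent vector $v_s\in\mathbb{R}^n$ with components $v_{s,i}=y_{s,i}^*/\mu_i(s)-c_s$ (for the appropriate centering constant $c_s$) lies in $\partial\log\Phi^*(\mu(s))$.

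Then I would apply Lemma \ref{lem:FGPsuperdiff}(ii) to build a portfolio $\pi^*$ generated by $\Phi^*$ with the prescribed values at $\mu(s)$: start from any measurable selection $v$ of $\partial\log\Phi^*$ (guaranteed by \cite[Theorem 14.56]{RW98} as cited), redefine $v$ at the finitely many points $\mu(0),\ldots,\mu(t-1)$ by the vectors $v_s$ constructed above (which preserves measurability, as it alters $v$ on a finite set), and let $\pi^*$ be the associated functionally generated portfolio. A direct computation using the formula in Lemma \ref{lem:FGPsuperdiff}(ii) shows $\pi^*(\mu(s))=y_s^*$. Consequently $V_{\pi^*}(t)=\lim_k V_{\pi_k}(t)=\sup_{\pi\in\mathcal{FG}}V_\pi(t)$, proving the claim.

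The only nontrivial step is the passage to the limit in the generating inequality and the subsequent realization of $(y_s^*)_{s<t}$ as values of a single functionally generated portfolio; the compactness of $\mathcal{C}_0$ and positivity of the limit $\Phi^*$ on the interior of $\Delta_n$ (a consequence of concavity and the normalization at $\overline{e}$) make this go through cleanly, and the measurable-selection modification at finitely many points is harmless.
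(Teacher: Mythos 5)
Your proposal is correct and follows essentially the same route as the paper's proof: extract a maximizing sequence, use compactness of $\left({\mathcal{C}}_0, d\right)$ and of $\overline{\Delta}_n$ to pass to convergent subsequences, take the limit in the generating inequality \eqref{eqn:fgpineq} to show the limit vectors are admissible, and redefine a portfolio generated by the limiting function at the finitely many visited points. The only cosmetic difference is that you route the admissibility check through the superdifferential characterization of Lemma \ref{lem:FGPsuperdiff}, whereas the paper verifies \eqref{eqn:checksuperdiff} directly.
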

\begin{proof}
The proof is essentially the one in \cite[Theorem 4(i)]{W14} and is included for completeness. Let $\{\pi_k\}_{k = 1}^{\infty}$ be a maximizing sequence, i.e.,
\[
\sup_{\pi \in {\mathcal{FG}}} V_{\pi}(t) = \lim_{k \rightarrow \infty} V_{\pi_k}(t) =  \lim_{k \rightarrow \infty} \prod_{s = 0}^{t - 1} \left(\pi_k(\mu(s)) \cdot \frac{\mu(s + 1)}{\mu(s)}\right).
\]
Let $\{\Phi_k\}_{k = 1}^{\infty} \subset {\mathcal{C}}_0$ be the corresponding generating functions. By the compactness of ${\mathcal{C}}_0$ we may pass to a subsequence so that $\Phi_k \rightarrow \Phi \in {\mathcal{C}}_0$ locally uniformly on $\Delta_n$. We may pass to a further subsequence such that the limit $\lim_{k \rightarrow \infty} \pi_k(\mu(s))$ exists in $\overline{\Delta}_n$ for all $0 \leq s \leq t - 1$.

Let $\pi^*$ be a portfolio generated by $\Phi$ which exists by Lemma \ref{lem:FGPsuperdiff}. We claim that if we redefine $\pi^*$ on $\{\mu(s): 0 \leq s \leq t - 1\}$ by setting
\[
\pi^*(\mu(s)) = \lim_{k \rightarrow \infty} \pi_k(\mu(s))
\]
for $0 \leq s \leq t - 1$, then $\pi^*$ is still generated by $\Phi$ and so is an element of $\Theta$. By \eqref{eqn:fgpineq} it suffices to check that
\begin{equation} \label{eqn:checksuperdiff}
\pi^*(\mu(s)) \cdot \frac{q}{\mu(s)} \geq \frac{\Phi(q)}{\Phi(\mu(s))}
\end{equation}
for all $0 \leq s \leq t - 1$ and $q \in \Delta_n$. Now since $\pi_k$ is generated by $\Phi_k$, we have
\[
\pi_k(\mu(s)) \cdot \frac{q}{\mu(s)} \geq \frac{\Phi_k(q)}{\Phi_k(\mu(s))}.
\]
Letting $k \rightarrow \infty$, we get \eqref{eqn:checksuperdiff} and so $\pi^*$ is generated by $\Phi$. The lemma follows by noting that $V_{\pi^*}(t) = \lim_{k \rightarrow \infty} V_{\pi_k}(t)$.
\end{proof}

Continuing the statistical analogy (see Remark \ref{rem:stat}), the portfolio $\pi^*$ may be viewed as the maximum likelihood estimator of the portfolio which maximizes the asymptotic growth rate $W(\pi) = \lim_{t \rightarrow \infty} \frac{1}{t} \log V_{\pi}(t)$.

\begin{lemma} \label{lem:superdiffuniform}
Let $\Phi_0 \in {\mathcal{C}}_0$ and $p_0 \in \Delta_n$. Let $K \subset \Delta_n$ be a compact set whose (relative) interior contains $p_0$. Then for any $\epsilon > 0$, there exists $\delta > 0$ such that whenever $\Phi \in {\mathcal{C}}_0$, $\max_{p \in K} \left| \Phi(p) - \Phi_0(p) \right| < \delta$ and $|q - p_0| < \delta$, we have
\[
\partial \log \Phi(q) \subset \partial \log  \Phi(p_0) + \epsilon \overline{B}(0, 1).
\]
\end{lemma}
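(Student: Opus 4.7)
The plan is to argue by contradiction, combining the compactness of ${\mathcal{C}}_0$ from Lemma \ref{lem:compact} with upper semicontinuity of the supergradient multifunction. Suppose the conclusion fails: there exist $\epsilon_0 > 0$, a sequence $\{\Phi_k\} \subset {\mathcal{C}}_0$ with $\max_{p \in K} |\Phi_k(p) - \Phi_0(p)| \to 0$, points $q_k \to p_0$, and supergradients $\xi_k \in \partial \log \Phi_k(q_k)$ such that $\mathrm{dist}(\xi_k, \partial \log \Phi_0(p_0)) \geq \epsilon_0$ for every $k$.

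\emph{Step 1 (uniform supergradient bound).} First I pick a compact neighborhood $K'$ of $p_0$ contained in the relative interior of $K$, so that $q_k \in K'$ eventually. Continuity and positivity of $\Phi_0$ on $K$ give $\inf_K \Phi_0 > 0$; uniform convergence on $K$ then yields $\inf_k \inf_K \Phi_k > 0$ and $\sup_k \sup_K \Phi_k < \infty$, so $\log \Phi_k$ is uniformly bounded on $K$. The classical estimate for supergradients of a concave function (oscillation divided by distance to the boundary of its domain of boundedness) produces a constant $C$ with $\|\xi\| \leq C$ for every $\xi \in \partial \log \Phi_k(q)$, $q \in K'$, $k \geq 1$; in particular $\{\xi_k\}$ is bounded.

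\emph{Step 2 (extract limits).} By Lemma \ref{lem:compact}, along a subsequence $\Phi_k \to \widetilde{\Phi}$ in $({\mathcal{C}}_0, d)$, i.e.\ locally uniformly on $\Delta_n$. The hypothesis $\max_K |\Phi_k - \Phi_0| \to 0$ forces $\widetilde{\Phi} = \Phi_0$ on $K$, hence on an open neighborhood of $p_0$. Passing to a further subsequence, $\xi_k \to \xi_\infty$ in the tangent space of $\Delta_n$.

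\emph{Step 3 (identify the limit as a supergradient of $\log \Phi_0$ at $p_0$).} Fix a neighborhood $U$ of $p_0$ with $U \subset K$. By the two-sided positivity bound from Step 1, $\log \Phi_k \to \log \Phi_0$ uniformly on $U$. Passing the defining inequality
\[
\log \Phi_k(r) \leq \log \Phi_k(q_k) + \langle \xi_k, r - q_k \rangle
\]
to the limit in $k$ yields
\[
\log \Phi_0(r) \leq \log \Phi_0(p_0) + \langle \xi_\infty, r - p_0 \rangle \quad \text{for all } r \in U.
\]
A chord argument using concavity of $\log \Phi_0$ upgrades this local affine upper bound to one valid on all of $\Delta_n$: a hypothetical violation at some $r_0 \in \Delta_n$ would, for small $t \in (0,1)$, force a violation at $(1-t) p_0 + t r_0 \in U$, since $\log \Phi_0((1-t) p_0 + t r_0) \geq (1-t)\log \Phi_0(p_0) + t \log \Phi_0(r_0)$. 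Hence $\xi_\infty \in \partial \log \Phi_0(p_0)$, contradicting $\mathrm{dist}(\xi_k, \partial \log \Phi_0(p_0)) \geq \epsilon_0$.

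\emph{Main obstacle.} The hard part is Step 3: the hypothesis controls $\Phi_k - \Phi_0$ only on the compact set $K$, so the supergradient inequality can be passed to the limit only for test points $r \in U$ near $p_0$. The concavity-based chord extension is indispensable for then promoting the local affine bound into a global one, which is exactly what makes $\xi_\infty$ a genuine element of $\partial \log \Phi_0(p_0)$; the remainder is standard compactness and upper semicontinuity.
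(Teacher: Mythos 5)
Your proof is correct and follows the same contradiction scheme as the paper, which merely sets up the negation sequences $\Phi_k$, $p_k$ and then invokes Rockafellar's Theorem 24.5 on upper semicontinuity of the subdifferential under pointwise convergence of convex functions. Your Steps 1--3 (uniform supergradient bound, extraction of limits, and the chord argument promoting the local affine bound to a global one so that $\xi_\infty \in \partial \log \Phi_0(p_0)$) amount to a self-contained proof of exactly the instance of that theorem the paper cites, so the two arguments coincide in substance.
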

\begin{proof}
This is a uniform version of \cite[Theorem 24.5]{R70}. We will proceed by contradiction. If the statement is false, there exists $\epsilon_0 > 0$ such that the following holds. For every $k \geq 1$, there exists $\Phi_k \in {\mathcal{C}}_0$ and $p_k \in \Delta_n$ such that
\[
\max_{p \in K} \left| \Phi_k(p) - \Phi_0(p) \right| < \frac{1}{k}, \quad \left|p_k - p_0\right| < \frac{1}{k}
\]
and
\[
\partial\log   \Phi_k(p_k) \not\subset \partial \log  \Phi(p_0) + \epsilon_0 \overline{B}(0, 1).
\]
This contradicts \cite[Theorem 24.5]{R70} and thus the lemma is proved.
\end{proof}

Using Lemma \ref{lem:superdiffuniform} and Proposition \ref{lem:FGPsuperdiff} we have the following corollary which is a refined version of \cite[Lemma 11]{W14}.

\begin{lemma} \label{eqn:diffcontinuity}
Let $\pi_0$ be a portfolio generated by $\Phi_0$. Let $p_0 \in \Delta_n$ be a point at which $\Phi_0$ is differentiable. For any $\epsilon > 0$ and any compact neighborhood $K$ of $p_0$ in $\Delta_n$, there exists $\delta > 0$ such that whenever $\pi$ is generated by $\Phi$ and $\max_{p \in K} \left| \Phi(p) - \Phi_0(p) \right| < \delta$, we have $\max_{p: |p - p_0| < \delta} |\pi(p) - \pi_0(p_0)| < \epsilon$.
\end{lemma}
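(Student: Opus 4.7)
The plan is to exploit the correspondence between portfolio values and supergradients of $\log \Phi$ given by Lemma \ref{lem:FGPsuperdiff}, together with the uniform upper semicontinuity of the superdifferential map in Lemma \ref{lem:superdiffuniform}. Since $\Phi_0$ is positive and differentiable at $p_0$, so is $\log \Phi_0$, and the superdifferential $\partial \log \Phi_0(p_0) = \{v_0\}$ is a singleton. By Lemma \ref{lem:FGPsuperdiff}(ii), $\pi_0(p_0)$ is recovered from $v_0$ via the explicit affine formula
\[
\pi_{0,i}(p_0) = p_{0,i}\left( v_{0,i} + 1 - \sum_{j=1}^n p_{0,j} v_{0,j} \right),
\]
and by Lemma \ref{lem:FGPsuperdiff}(i) the same formula, applied to any $v(q) \in \partial \log \Phi(q)$ realizing $\pi$, recovers $\pi(q)$. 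Denote this common expression by $F(q, v)$; it is smooth in both arguments.

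Given $\epsilon > 0$, I would first choose $\epsilon' > 0$ so small that $|F(q, v) - F(p_0, v_0)| < \epsilon$ whenever $|q - p_0| < \epsilon'$ and $|v - v_0| < \epsilon'$. Then I would apply Lemma \ref{lem:superdiffuniform} (with $\epsilon'$ in place of its $\epsilon$) at $\Phi_0$ and $p_0$, shrinking $K$ if necessary so that $p_0$ lies in its relative interior. This produces $\delta_0 > 0$ such that for every $\Phi \in {\mathcal{C}}_0$ with $\max_{p \in K} |\Phi(p) - \Phi_0(p)| < \delta_0$ and every $q$ with $|q - p_0| < \delta_0$, the entire set $\partial \log \Phi(q)$ lies within $\epsilon'$ of $v_0$. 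Setting $\delta := \min(\delta_0, \epsilon')$ and chaining the two continuity estimates through $F$ yields $|\pi(q) - \pi_0(p_0)| < \epsilon$ for every such $q$, which is the desired conclusion.

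The main subtlety is that two portfolios generated by the same $\Phi$ may correspond to different measurable selections of $\partial \log \Phi$, so the selected $v(q)$ can vary arbitrarily with $q$ as $\pi$ varies. This is handled by Lemma \ref{lem:superdiffuniform}, whose conclusion controls the \emph{entire} superdifferential at every nearby $q$, making the choice of selection immaterial. Differentiability of $\Phi_0$ at $p_0$ is also essential: it collapses $\partial \log \Phi_0(p_0)$ to the single vector $v_0$, which is what turns set-valued upper semicontinuity into genuine closeness in norm.
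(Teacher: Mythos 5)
Your proof is correct and follows exactly the route the paper intends: the paper states this lemma as an immediate corollary of Lemma \ref{lem:superdiffuniform} and Lemma \ref{lem:FGPsuperdiff} without writing out the details, and your argument (singleton superdifferential at $p_0$ by differentiability, uniform set inclusion from Lemma \ref{lem:superdiffuniform}, and continuity of the affine reconstruction formula) is precisely that derivation, with the selection-independence issue correctly identified and handled.
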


We end this subsection with some technical remarks.

\begin{remark}
It is natural to ask why we do not use the compact set ${\mathcal{C}}_0$ as the index set. There are three reasons for this. First, the portfolio maps $\pi: \Delta_n \rightarrow \overline{\Delta}_n$ are the primary objects for portfolio analysis, and the generating functions are only derived entities. Second, even if $\pi_1$ and $\pi_2$ have the same generating function $\Phi$, over a finite horizon $V_{\pi_1}(t)$ and $V_{\pi_2}(t)$ may have quite different behaviors. This is because it may happen that the market lands repeatedly at the points where $\Phi$ is not differentiable and the two portfolios differ. Third, even though for each $\Phi \in {\mathcal{C}}_0$ we may choose a portfolio $\pi_{\Phi}$ generated by $\Phi$, there is no canonical way of doing this so that the maps $\Phi \mapsto \pi_{\Phi}$ and $\Phi \mapsto V_{\pi_{\Phi}}(t)$ are measurable.
\end{remark}

\subsection{Asymptotic growth rate}
Recall from Lemma \ref{lem:integral} that $V_{\pi}(t)$ can be written in the form $\frac{1}{t} \log V_{\pi}(t) = \int_S \ell_{\pi} d {\Bbb P}_t$, where $\ell_{\pi}(p, q) = \log \left( \pi(p) \cdot \frac{q}{p} \right)$ is defined in \eqref{eqn:returnfunction} and
\[
{\Bbb P}_t = \frac{1}{t} \sum_{s = 0}^{t - 1} \delta_{(\mu(s), \mu(s + 1))}
\]
is the empirical measure of the pair $(\mu(s), \mu(s + 1))$ up to time $t$. Appealing to the long term stability of capital distribution, we assume that ${\Bbb P}$ converges weakly to an absolutely continuous probability measure ${\Bbb P}$. We denote by $B(p, \delta)$ the Euclidean ball in $\Delta_n$ centered at $p$ with radius $\delta$. The Euclidean norm is denoted by $|\cdot|$.

First we prove a `strong law of large numbers' for individual elements of ${\mathcal{FG}}$. We will use some basic results of the theory of weak convergence \cite{B09}. Recall that a ${\Bbb P}$-continuity set is a set $A$ satisfying ${\Bbb P}\left(\partial A \right) = 0$, where $\partial A$ is the boundary of $A$. We write $\partial_{{\mathcal{S}}} A$ if we want to be explicit about the underlying topological space.

\begin{lemma} \label{lem:LLN}
Suppose ${\Bbb P}_t$ converges weakly to an absolutely continuous probability measure ${\Bbb P}$ on ${\mathcal{S}}$. Then for every $\pi \in {\mathcal{FG}}$ the asymptotic growth rate $W(\pi) = \lim_{t \rightarrow \infty} \frac{1}{t} \log V_{\pi}(t)$ exists and is given by
\begin{equation} \label{eqn:weakconvergence}
W(\pi) = \lim_{t \rightarrow \infty} \int_{\mathcal{S}} \ell_{\pi} d {\Bbb P}_t = \int_{\mathcal{S}} \ell_{\pi} d {\Bbb P}.
\end{equation}
\end{lemma}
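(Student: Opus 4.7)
The plan is to combine the integral representation from Lemma \ref{lem:integral} with a Portmanteau-type argument. By Lemma \ref{lem:integral} we have $\frac{1}{t} \log V_{\pi}(t) = \int_{\mathcal{S}} \ell_{\pi} \, d{\Bbb P}_t$, so the claim reduces to showing $\int_{\mathcal{S}} \ell_{\pi} \, d{\Bbb P}_t \to \int_{\mathcal{S}} \ell_{\pi} \, d{\Bbb P}$ as $t \to \infty$. Because $(\mu(s), \mu(s+1)) \in {\mathcal{S}}$ and $\pi(p) \cdot q/p \in [1/M, M]$ on ${\mathcal{S}}$, the integrand $\ell_{\pi}$ is uniformly bounded by $\log M$. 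The obstacle is that weak convergence ${\Bbb P}_t \Rightarrow {\Bbb P}$ integrates bounded continuous test functions, whereas a functionally generated $\pi$ is typically discontinuous on the non-differentiability set of its generator, so $\ell_{\pi}$ need not be continuous.

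To overcome this I would appeal to the Portmanteau theorem in the following form: if ${\Bbb P}_t \Rightarrow {\Bbb P}$ and $f$ is bounded and measurable with ${\Bbb P}(\mathrm{Disc}(f)) = 0$, then $\int f \, d{\Bbb P}_t \to \int f \, d{\Bbb P}$. So it suffices to show that the discontinuity set of $\ell_{\pi}$ is ${\Bbb P}$-null. Let $\pi$ be generated by $\Phi \in {\mathcal{C}}_0$ and let $D \subset \Delta_n$ be the set of points where $\Phi$ (equivalently $\log \Phi$) fails to be differentiable. By Rockafellar's theorem on the differentiability of concave functions on a finite-dimensional convex set, $D$ has Lebesgue measure zero in $\Delta_n$. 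Lemma \ref{eqn:diffcontinuity} shows that $\pi$ is continuous at every $p_0 \notin D$, and then the map $(p, q) \mapsto \pi(p) \cdot q/p$ is continuous at $(p_0, q_0)$ for every $q_0 \in \Delta_n$. Since the argument stays inside $[1/M, M]$ on ${\mathcal{S}}$, composing with $\log$ preserves continuity, so $\mathrm{Disc}(\ell_{\pi}) \cap {\mathcal{S}} \subset (D \times \Delta_n) \cap {\mathcal{S}}$.

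The final step is the absolute continuity hypothesis: since $D$ is Lebesgue-null in $\Delta_n$, the product $D \times \Delta_n$ is Lebesgue-null in $\Delta_n \times \Delta_n$, and because ${\Bbb P} \ll \mathrm{Leb}$ we obtain ${\Bbb P}(D \times \Delta_n) = 0$. Hence $\mathrm{Disc}(\ell_{\pi})$ is a ${\Bbb P}$-null Borel set, and Portmanteau gives $\int_{\mathcal{S}} \ell_{\pi} \, d{\Bbb P}_t \to \int_{\mathcal{S}} \ell_{\pi} \, d{\Bbb P}$. Combined with the identity from Lemma \ref{lem:integral} this yields both the existence of $W(\pi)$ and the representation \eqref{eqn:weakconvergence}. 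The main conceptual point is that while ${\mathcal{FG}}$ contains genuinely discontinuous maps, the discontinuities are confined to a Lebesgue-null subset of $\Delta_n$, and the absolute continuity assumption on ${\Bbb P}$ is precisely what is needed to make those discontinuities invisible to the limiting empirical measure.
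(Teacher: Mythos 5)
Your proof is correct, and it isolates exactly the same three ingredients the paper uses: boundedness of $\ell_{\pi}$ on ${\mathcal{S}}$ under Assumption \ref{ass:bound}, continuity of $\pi$ at every differentiability point of its generator (which, as you note, is the $\Phi = \Phi_0$ case of Lemma \ref{eqn:diffcontinuity}), and the fact that the non-differentiability set $D$ is Lebesgue-null so that absolute continuity of ${\Bbb P}$ makes $\mathrm{Disc}(\ell_{\pi}) \subset (D \times \Delta_n) \cap {\mathcal{S}}$ a ${\Bbb P}$-null set. Where you diverge is in how the convergence of integrals is then extracted: you invoke the almost-sure-continuity form of the Portmanteau/mapping theorem as a black box, whereas the paper proves that convergence by hand. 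It covers $D$ by countably many balls on which $\pi$ is within $\epsilon'$ of a constant, disjointifies them into ${\Bbb P}$-continuity sets $A_k$, keeps finitely many carrying mass $1-\epsilon$, approximates $\ell_{\pi}$ on each $A_k$ by the continuous function $\ell_{\pi(p_k)}$ of a constant-weighted portfolio, and applies the appendix lemma on weak convergence restricted to continuity sets. Your route is shorter and cleaner for the pointwise statement (essentially you are citing the theorem whose proof the paper inlines), but the paper's explicit decomposition is not wasted effort: the same covering construction is reused, with $\delta$ chosen uniformly over a $d$-neighborhood in ${\mathcal{C}}_0$ via Lemma \ref{lem:superdiffuniform}, to prove the uniform (Glivenko--Cantelli) statement of Lemma \ref{lem:ULLN}, which the black-box Portmanteau argument does not directly deliver. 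One small point worth making explicit in your write-up: the identity function is not bounded continuous on ${\Bbb R}$, so when you push forward by $\ell_{\pi}$ you should note that all the image measures are supported in $[-\log M, \log M]$ (or state the a.s.-continuity version of Portmanteau for bounded measurable integrands directly); this is routine but is the step where boundedness is actually consumed.
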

\begin{proof}
Note that \eqref{eqn:weakconvergence} does not follow directly from the definition of weak convergence because $\ell_{\pi}$ may have discontinuities. The constructions here (refined from the proof of \cite[Theorem 5]{W14}) will be useful when we prove uniform convergence in Lemme \ref{lem:ULLN}.

Let $\epsilon > 0$ be given. Let $\Phi \in {\mathcal{C}}_0$ be the generating function of $\pi$ and consider the set
\[
D = \{p \in \Delta_n: \Phi \text{ is differentiable at } p\}.
\]
Then $\Delta_n \setminus D$ has Lebesgue measure $0$. Given $\epsilon$, there exists $\epsilon' > 0$ such that whenever $\pi_1, \pi_2 \in \overline{\Delta}_n$ and $|\pi_1 - \pi_2| < \epsilon'$, we have
\begin{equation} \label{eqn:returnfunctionbound}
\left| \log \left( \pi_1 \cdot \frac{q}{p} \right) - \log \left( \pi_2 \cdot \frac{q}{p} \right) \right| < \epsilon
\end{equation}
for all $(p, q) \in {\mathcal{S}}$.

For each $p \in D$, by Lemma \ref{eqn:diffcontinuity} there exists $\delta(p) > 0$ such that $B(p, \delta(p)) \subset \Delta_n$ and $|q - p| < \delta(p)$ implies
\begin{equation} \label{eqn:weightdifference}
|\pi(q) - \pi(p)| < \epsilon'.
\end{equation}

As a subspace of a separable metric space, $D$ is separable. Hence, there exists a countable set $\{p_k\}_{k = 1}^{\infty} \subset D$ such that
\[
D \subset \bigcup_{k = 1}^{\infty} B(p_k, \delta(p_k)).
\]
Let $A_1 = B(p_1, \delta(p_1))$ and for $k \geq 2$ define
\[
A_k = B(p_k, \delta(p_k)) \setminus \bigcup_{j = 1}^{k-1} B(p_j, \delta(p_j)).
\]
Then the sets $\{A_k\}$ are disjoint and
\[
\left(D \times \Delta_n \right) \cap {\mathcal{S}} \subset \bigcup_{k = 1}^{\infty} \left(A_k \times \Delta_n\right) \cap {\mathcal{S}}.
\]

Since ${\Bbb P} \left( \left(D \times \Delta_n \right) \cap {\mathcal{S}} \right) = 1$ by absolute continuity, by continuity of measure there exists a positive integer $k_0$ such that
\[
{\Bbb P} \left( \bigcup_{k = 1}^{k_0} \left(A_k \times \Delta_n\right) \cap {\mathcal{S}}\right) > 1 - \epsilon.
\]
Define
\[
A_0 = \Delta_n \setminus \left( \bigcup_{k = 1}^{k_0} A_k \right).
\]
Then
\begin{equation} \label{eqn:remaindermeasure}
{\Bbb P}\left((A_0 \times \Delta_n) \cap {\mathcal{S}}\right) \leq \epsilon.
\end{equation}

Note that for $0 \leq k \leq k_0$, $\left(A_k \times \Delta_n\right) \cap {\mathcal{S}}$ is a ${\Bbb P}$-continuity set as it is formed by set-theoretic operations on ${\mathcal{S}}$ (which has piecewise smooth boundary), $\Delta_n$ and Euclidean balls. Also, by Assumption \ref{ass:bound} $|\ell_{\pi}(\cdot, \cdot)|$ is bounded uniformly on ${\mathcal{S}}$ by $M' := \log M$. So, for each $1 \leq k \leq k_0$ the map
\[
(p, q) \mapsto \ell_{\pi(p(k))}\left(p, q\right) := \log \left(\pi(p(k)) \cdot \frac{q}{p}\right)
\]
is a bounded continuous function on ${\mathcal{S}}$.

By weak convergence and Lemma \ref{lem:weakconvergence} in the Appendix, there exists a positive integer $t_0$ such that for $t \geq t_0$ we have
\begin{equation} \label{eqn:inequalityepsilon}
{\Bbb P}_t \left( \left( A_0 \times \Delta_n \right) \cap {\mathcal{S}} \right)  < 2\epsilon
\end{equation}
and
\begin{equation} \label{eqn:lemconsequence}
\left|\int_{(A_k \times \Delta_n) \cap {\mathcal{S}}} \ell_{\pi(p(k))} d ({\Bbb P}_t - {\Bbb P}) \right| < \frac{\epsilon}{k_0}.
\end{equation}
(note that $k_0$ is fixed before $t_0$ is chosen).

Now we estimate the difference $\left|\frac{1}{t} \log V_{\pi}(t) - \int_{{\mathcal{S}}} \ell_{\pi} d {\Bbb P}\right| = \left| \int_{{\mathcal{S}}} \ell_{\pi} d  \left( {\Bbb P}_t - {\Bbb P} \right) \right|$. We have
\begin{equation} \label{eqn:mainestimate1}
\begin{split}
\left| \int_{{\mathcal{S}}} \ell_{\pi} d  \left( {\Bbb P}_t - {\Bbb P} \right) \right|
&\leq  \left|\sum_{k = 1}^{k_0} \int_{(A_k \times \Delta_n) \cap {\mathcal{S}}} \ell_{\pi} d\left( {\Bbb P}_t - {\Bbb P}\right)\right| +
\left| \int_{(A_0 \times \Delta_n) \cap{\mathcal{S}}} \ell_{\pi} d\left( {\Bbb P}_t - {\Bbb P}\right) \right|.
\end{split}
\end{equation}

Using the boundedness of $\ell_{\pi}$,  \eqref{eqn:remaindermeasure} and \eqref{eqn:inequalityepsilon}, the second term of \eqref{eqn:mainestimate1} is bounded by $3M' \epsilon$. Now for each $k$, by \eqref{eqn:returnfunctionbound}, \eqref{eqn:weightdifference} and \eqref{eqn:lemconsequence} we have
\begin{equation*}
\begin{split}
\left| \int_{(A_k \times \Delta_n) \cap {\mathcal{S}}} \ell_{\pi} d\left( {\Bbb P}_t - {\Bbb P}\right)\right| &\leq \int_{(A_k \times \Delta_n) \cap {\mathcal{S}}}  \left| \ell_{\pi} - \ell_{\pi(p_k)} \right| d {\Bbb P}_t \\
& \ \ \ + \int_{(A_k \times \Delta_n) \cap {\mathcal{S}}}  \left| \ell_{\pi} - \ell_{\pi(p_k)} \right| d {\Bbb P} \\
& \ \ \ + \left|  \int_{(A_k \times \Delta_n) \cap {\mathcal{S}}} \ell_{\pi(p_k)} d \left( {\Bbb P}_t - {\Bbb P} \right) \right| \\
&\leq \epsilon {\Bbb P}_t \left((A_k \times \Delta_n) \cap {\mathcal{S}}\right) + \epsilon  {\Bbb P} \left((A_k \times \Delta_n) \cap {\mathcal{S}}\right) + \frac{\epsilon}{k_0}.
\end{split}
\end{equation*}
Summing the above inequality over $k$, we get
\[
\left| \int_S \ell_{\pi} d \left({\Bbb P}_t - {\Bbb P}\right) \right| \leq \epsilon + \epsilon + \epsilon + 3M'\epsilon, \quad t \geq t_0,
\]
and the lemma is proved.
\end{proof}

\subsection{Glivenko-Cantelli property}
Now we observe that the proof of Lemma \ref{lem:LLN} can be modified to yield a uniform version which implies Theorem \ref{thm:main2}(i). Recall that $d(\Phi, \Psi)$ is the metric on ${\mathcal{C}}_0$ given in Definition \ref{def:metric}.

\begin{lemma} \label{lem:ULLN}
Suppose ${\Bbb P}_t$ converges weakly to an absolutely continuous probability measure ${\Bbb P}$ on ${\mathcal{S}}$. Let $\pi_0 \in {\mathcal{FG}}$ be generated by $\Phi_0 \in {\mathcal{C}}_0$. For any $\epsilon > 0$, there exists $\delta > 0$ such that
\begin{equation} \label{eqn:localuniformFG}
\limsup_{t \rightarrow \infty} \sup_{\pi \in {\mathcal{FG}}(\pi_0, \delta)} \left| \frac{1}{t} \log V_{\pi}(t)  - \frac{1}{t} \log V_{\pi_0}(t) \right| < \epsilon,
\end{equation}
where ${\mathcal{FG}}(\pi_0, \delta)$ is the set of all functionally generated portfolio $\pi$ whose generating function $\Phi \in {\mathcal{C}}_0$ satisfies $d(\Phi, \Phi_0) < \delta$. In particular, we have the `uniform strong law of large numbers'
\begin{equation} \label{eqn:ULLN}
\lim_{t \rightarrow \infty} \sup_{\pi \in {\mathcal{FG}}} \left| \frac{1}{t} \log V_{\pi}(t) - W(\pi) \right| = 0.
\end{equation}
\end{lemma}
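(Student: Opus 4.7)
The proof will have two stages suggested by the statement itself. First I would establish the local bound \eqref{eqn:localuniformFG}, which asserts that the oscillation of $\frac{1}{t}\log V_\pi(t)$ over a small $d$-neighborhood of a fixed generator vanishes asymptotically. Then I would promote this local statement to the global \eqref{eqn:ULLN} by covering the compact space $({\mathcal{C}}_0, d)$ from Lemma \ref{lem:compact} by finitely many such neighborhoods and combining with the pointwise Lemma \ref{lem:LLN}.

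For the local stage the plan is to re-run the bracketing argument in the proof of Lemma \ref{lem:LLN} but to choose the cover of $\Delta_n$ so that it controls not only $\pi_0$ but every sufficiently close $\pi$ simultaneously. Fix $\epsilon > 0$, let $\pi_0$ be generated by $\Phi_0$, and let $D \subset \Delta_n$ be the (full-Lebesgue-measure) set of differentiability points of $\Phi_0$. For each $p \in D$ apply Lemma \ref{eqn:diffcontinuity} to a fixed compact neighborhood $K(p)$ of $p$ to obtain a radius $\delta(p) > 0$ and a tolerance $\tau(p) > 0$ such that whenever $\Phi \in {\mathcal{C}}_0$ satisfies $\max_{K(p)}|\Phi - \Phi_0| < \tau(p)$, every portfolio generated by $\Phi$ lies within $\epsilon'$ of $\pi_0(p)$ on $B(p,\delta(p))$; here $\epsilon'$ is chosen so that \eqref{eqn:returnfunctionbound} is satisfied. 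Pick a countable dense subfamily $\{p_k\} \subset D$, form the disjoint sets $A_k$ exactly as in Lemma \ref{lem:LLN}, and use absolute continuity of ${\Bbb P}$ to select $k_0$ with ${\Bbb P}\bigl((A_0 \times \Delta_n) \cap {\mathcal{S}}\bigr) < \epsilon$ where $A_0 = \Delta_n \setminus \bigcup_{k \leq k_0} A_k$. Finally take $\delta > 0$ small enough that $d(\Phi,\Phi_0) < \delta$ forces $\max_{K(p_k)}|\Phi-\Phi_0| < \tau(p_k)$ for every $1 \leq k \leq k_0$; this is possible because $d$ metrizes local uniform convergence and only finitely many compact sets are involved. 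For any $\pi \in {\mathcal{FG}}(\pi_0,\delta)$ the triangle inequality $|\ell_\pi - \ell_{\pi_0(p_k)}| + |\ell_{\pi_0} - \ell_{\pi_0(p_k)}|$ then gives $|\ell_\pi - \ell_{\pi_0}| \leq 2\epsilon$ on each $(A_k \times \Delta_n) \cap {\mathcal{S}}$ with $1 \leq k \leq k_0$, while on the residual slab $(A_0 \times \Delta_n) \cap {\mathcal{S}}$ the integrand is uniformly bounded by $2\log M$ thanks to Assumption \ref{ass:bound}. Integrating against ${\Bbb P}_t$, invoking weak convergence on the residual (as in the passage from \eqref{eqn:inequalityepsilon} to \eqref{eqn:mainestimate1} in the proof of Lemma \ref{lem:LLN}), and taking $\limsup$ yields \eqref{eqn:localuniformFG}.

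For the global stage, by Lemma \ref{lem:compact} the space $({\mathcal{C}}_0, d)$ is compact, so given $\epsilon > 0$ I would extract from the local bound a finite cover of ${\mathcal{C}}_0$ by $d$-balls $U_1,\ldots,U_N$ centered at $\Phi_1,\ldots,\Phi_N$ such that \eqref{eqn:localuniformFG} holds with oscillation at most $\epsilon/3$ on each; fix a portfolio $\pi_i$ generated by $\Phi_i$. Lemma \ref{lem:LLN} supplies $\frac{1}{t}\log V_{\pi_i}(t) \to W(\pi_i)$ for each $i$, and passing to $\limsup$ in \eqref{eqn:localuniformFG} shows $|W(\pi) - W(\pi_i)| \leq \epsilon/3$ whenever $\pi$ is generated by some $\Phi \in U_i$. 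Combining the three bounds by the triangle inequality gives $\sup_{\pi \in {\mathcal{FG}}}\bigl|\frac{1}{t}\log V_\pi(t) - W(\pi)\bigr| \leq \epsilon$ for all sufficiently large $t$; since every $\pi \in {\mathcal{FG}}$ has its normalized generator in some $U_i$, this establishes \eqref{eqn:ULLN}. The main obstacle is the simultaneous control of nearby generating functions on a single finite cover of $\Delta_n$, and it is handled by Lemma \ref{eqn:diffcontinuity}, whose conclusion applies to \emph{every} measurable selection of the superdifferential of $\Phi$; this is precisely what lets the tolerance $\delta$ be chosen independently of which portfolio $\pi$ generated by $\Phi$ one happens to pick from the (possibly multi-valued) correspondence $\Phi \mapsto \pi$.
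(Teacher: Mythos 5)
Your proposal is correct and follows essentially the same route as the paper: the local bound is obtained by re-running the bracketing construction of Lemma \ref{lem:LLN} with Lemma \ref{eqn:diffcontinuity} supplying the uniform control over all $\pi$ generated by $\Phi$ close to $\Phi_0$, and the global statement follows from compactness of $({\mathcal{C}}_0,d)$ together with the pointwise convergence of Lemma \ref{lem:LLN}. The only cosmetic difference is that the paper first truncates to a single compact set $K_m$ with ${\Bbb P}((K_m\times\Delta_n)\cap{\mathcal{S}})>1-\epsilon$ before applying Lemma \ref{eqn:diffcontinuity}, whereas you use per-point compact neighborhoods $K(p_k)$ and observe that only finitely many matter after truncation at $k_0$; both yield the same single $\delta$.
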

\begin{proof}
We want to estimate
\[
 \sup_{\pi \in {\mathcal{FG}}(\pi_0, \delta)} \left|\frac{1}{t} \log V_{\pi}(t) - \frac{1}{t} \log V_{\pi_0}(t)\right| =  \sup_{\pi \in {\mathcal{FG}}(\pi_0, \delta)} \left| \int_S \left(\ell_{\pi} - \ell_{\pi_0} \right) d{\Bbb P}_t\right|.
\]
Recall from Definition \ref{def:metric} that $K_m = \left\{p \in \Delta_n: p_i \geq \frac{1}{m}\right\}$. By continuity of measure, we can choose $m$ so that
\[
{\Bbb P} \left((K_m \times \Delta_n) \cap S\right) > 1 - \epsilon.
\]
Since $(K_m \times \Delta_n) \cap {\mathcal{S}}$ is a ${\Bbb P}$-continuity set, for $t$ sufficiently large we have
\[
\left|\left( \int_{{\mathcal{S}}} - \int_{(K_m \times \Delta_n) \cap {\mathcal{S}}} \right)\left(\ell_{\pi} - \ell_{\pi_0} \right) d{\Bbb P}_t \right| < 4M \epsilon,
\]
where $M' = \log M$ is the upper bound of $|\ell_{\pi}|$ and $|\ell_{\pi_0}|$ on ${\mathcal{S}}$. This allows us to focus on the set $\left(K_m \cap \Delta_n\right) \cap {\mathcal{S}}$.

Fix $\epsilon' > 0$. By Lemma \ref{eqn:diffcontinuity}, for each $p$ in the (relative) interior of $K_m$ at which $\Phi_0$ is differentiable (call this set $D_m$), there exists $\delta'(p) > 0$ such that whenever $\max_{q \in K_m} |\Phi(q) - \Phi_0(q)| < \delta'(p)$ and $|q - p| < \delta'(p)$, we have $|\pi(q) - \pi_0(p)| < \epsilon'$.

As in the proof of Lemma \ref{lem:LLN}, we may cover $D_m$ by a disjoint countable union $\bigcup_{k = 1}^{\infty} A_k$, where $A_k$ is a ${\Bbb P}$-continuity set containing $p_k$ and has diameter bounded by $\delta'(p_k)$.

Now choose a positive integer $k_0$ such that
\[
{\Bbb P} \left( \left( \bigcup_{k = 1}^{k_0} A_k \times \Delta_n \right) \cap {\mathcal{S}} \right) > 1 - 2 \epsilon.
\]
Also, choose $\delta > 0$ such that
\[
d(\Phi, \Phi_0) < \delta \Rightarrow \max_{p \in K_m} |\Phi(p) - \Phi_0(p)| < \min_{1 \leq k \leq k_0} \delta'(p_k).
\]
It follows that
\[
\sup_{\pi \in {\mathcal{FG}}(\pi_0, \delta(p_k))} \sup_{p: |p - p_k| < \delta'(p_k)}  \left|\pi(p) - \pi_0(p_k)\right| < \epsilon',
\]
With this uniform local approximation, we may follow the same steps as the proof of Lemma \ref{lem:LLN} to prove that
\[
\limsup_{t \rightarrow \infty} \sup_{\pi \in {\mathcal{FG}}(\pi_0, \delta)} \left| \frac{1}{t} \log V_{\pi}(t)  - W(\pi) \right| < C\epsilon,
\]
where $C > 0$ is a constant. Thus \eqref{eqn:localuniformFG} follows by letting $\epsilon \rightarrow 0$.

Note that \eqref{eqn:localuniformFG} implies that $\sup_{\pi \in {\mathcal{FG}}(\pi_0, \delta)} \left|W(\pi) - W(\pi_0) \right| \leq \epsilon$. Since ${\mathcal{C}}_0$ is compact, we may cover ${\mathcal{FG}}$ by finitely many sets of the form ${\mathcal{FG}}(\pi_0, \delta)$, and \eqref{eqn:ULLN} follows.
\end{proof}

\subsection{LDP and universality}
Now we finish the proof of Theorem \ref{thm:main2}. Recall that $\widehat{V}(t) = \int_{\Theta} V_{\pi}(t) d\nu_0(\pi)$ and $V^*(t) = \sup_{\pi \in \Theta} V_{\pi}(t)$.

\begin{lemma} \label{lem:Covergrowthrate}
Suppose ${\Bbb P}_t$ converges weakly to an absolutely continuous probability measure ${\Bbb P}$ on ${\mathcal{S}}$. Let $\nu_0$ be any initial distribution on ${\mathcal{FG}}$ and $W^* = \sup_{\pi \in \mathrm{supp}(\nu_0)} W(\pi)$. Then $\lim_{t \rightarrow \infty} \frac{1}{t} \log \widehat{V}(t) = W^*$.
\end{lemma}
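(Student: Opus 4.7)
The plan is to split the proof into matching upper and lower bounds on $\frac{1}{t}\log \widehat{V}(t)$, using the uniform strong law of large numbers (Lemma \ref{lem:ULLN}) as the main engine, together with the definition of support to convert continuity of $W$ into positive $\nu_0$-mass on appropriate neighborhoods.

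First I would record the easy observation that $W:{\mathcal{FG}}\to\mathbb{R}$ is continuous with respect to the uniform topology on ${\mathcal{FG}}$. Indeed, by Assumption \ref{ass:bound} the quantity $\pi(p)\cdot q/p$ stays in $[1/M,M]$ on ${\mathcal{S}}$, so if $\pi_k\to\pi_0$ uniformly then $\ell_{\pi_k}\to\ell_{\pi_0}$ uniformly on ${\mathcal{S}}$, and the representation $W(\pi)=\int_{{\mathcal{S}}}\ell_{\pi}\,d{\Bbb P}$ from Lemma \ref{lem:LLN} gives $W(\pi_k)\to W(\pi_0)$. Combined with Lemma \ref{lem:ULLN}, this means we may freely pass between $\frac{1}{t}\log V_{\pi}(t)$ and $W(\pi)$ at an arbitrary uniform cost.

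For the upper bound, fix $\epsilon>0$. By the uniform SLLN \eqref{eqn:ULLN} there is $t_0$ such that $\frac{1}{t}\log V_{\pi}(t)\le W(\pi)+\epsilon$ for all $\pi\in{\mathcal{FG}}$ and $t\ge t_0$. Since $W(\pi)\le W^{*}$ for $\pi\in\supp(\nu_0)$ and the integrand in $\widehat V(t)=\int_{\supp(\nu_0)}V_{\pi}(t)\,d\nu_0(\pi)$ is supported there, we get $\widehat V(t)\le e^{t(W^{*}+\epsilon)}$, hence $\limsup_{t\to\infty}\frac{1}{t}\log\widehat V(t)\le W^{*}+\epsilon$, and letting $\epsilon\to0$ gives the upper bound.

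For the lower bound, choose $\pi_0\in\supp(\nu_0)$ with $W(\pi_0)>W^{*}-\epsilon$; such a $\pi_0$ exists by the definition of $W^{*}$ as the supremum over $\supp(\nu_0)$. By continuity of $W$ there is a uniform-metric open neighborhood $U$ of $\pi_0$ in ${\mathcal{FG}}$ on which $W(\pi)\ge W(\pi_0)-\epsilon\ge W^{*}-2\epsilon$. Because $\pi_0\in\supp(\nu_0)$, we have $\nu_0(U)>0$. Applying \eqref{eqn:ULLN} once more, for $t$ sufficiently large and all $\pi\in U$ we have $V_{\pi}(t)\ge \exp(t(W(\pi)-\epsilon))\ge \exp(t(W^{*}-3\epsilon))$. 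Therefore
\[
\widehat V(t)\ \ge\ \int_{U}V_{\pi}(t)\,d\nu_0(\pi)\ \ge\ \nu_0(U)\,e^{t(W^{*}-3\epsilon)},
\]
which yields $\liminf_{t\to\infty}\frac{1}{t}\log\widehat V(t)\ge W^{*}-3\epsilon$, and letting $\epsilon\to 0$ completes the proof.

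The only subtlety, and the step I would double-check carefully, is making sure the neighborhood of $\pi_0$ that carries the uniform approximation is really an open set in the topology in which $\supp(\nu_0)$ is defined. Lemma \ref{lem:ULLN} is phrased in terms of the metric $d$ on generating functions via ${\mathcal{FG}}(\pi_0,\delta)$, whereas ${\mathcal{FG}}$ carries the topology of uniform convergence of portfolio maps. The remedy is to bypass the localized version \eqref{eqn:localuniformFG} and use instead the global uniform SLLN \eqref{eqn:ULLN} together with the continuity of $W$ on ${\mathcal{FG}}$ noted above; this produces a genuine open set $U\subset{\mathcal{FG}}$ in the correct topology, so that $\nu_0(U)>0$ follows directly from $\pi_0\in\supp(\nu_0)$.
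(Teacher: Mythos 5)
Your proof is correct and takes essentially the same route as the paper's: both rely on the global uniform strong law of large numbers \eqref{eqn:ULLN} for the upper bound, and on the continuity of $W$ on ${\mathcal{FG}}$ together with the positivity of $\nu_0$ on open neighborhoods of points of $\supp(\nu_0)$ for the lower bound. Your explicit verification that $W$ is continuous in the uniform topology (so that the neighborhood $U$ is genuinely open in the topology defining $\supp(\nu_0)$, sidestepping the metric $d$ on generating functions) fills in a detail the paper leaves implicit.
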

\begin{proof}
For $\pi \in {\mathcal{FG}}$ write
\[
\frac{1}{t} \log V_{\pi}(t) = W(\pi) + R_{\pi}(t)
\]
where $R_{\pi}(t)$ is the remainder. By Lemma \ref{lem:ULLN} we have $\lim_{t \rightarrow \infty} \sup_{\pi \in {\mathcal{FG}}} \left| R_{\pi}(t) \right| = 0$. Write
\[
\widehat{V}(t) = \int_{\mathrm{supp}(\nu_0)} e^{t(W(\pi) + R_{\pi}(t))} d \nu_0(\pi).
\]
It is clear that $\limsup_{t \rightarrow \infty} \frac{1}{t} \log \widehat{V}(t) \leq W^*$. To show the other inequality, note that $W(\pi)$ is continuous in $\pi \in {\mathcal{FG}}$. Thus for any $\pi \in \supp(\nu_0)$ and $\epsilon > 0$, by restricting the integral to a neighborhood of $\pi$ we have $\liminf_{t \rightarrow \infty} \frac{1}{t} \log \widehat{V}(t) \geq W(\pi) - \epsilon$. Taking supremum over $\pi \in \supp(\nu_0)$ completes the proof.
\end{proof}

\begin{proof}[Proof of Theorem \ref{thm:main2}]
(i) This has been proved in Lemma \ref{lem:ULLN}.

(ii) We argue as in the proof of Theorem \ref{thm:main2}. Write
\[
\nu_t(B) = \frac{1}{\widehat{V}(t)} \int_{B \cap \supp(\nu_0)} V_{\pi}(t) d\nu_0(\pi).
\]
Using the uniform convergence property (i), we can show that
\begin{equation} \label{eqn:upperbound2}
\limsup_{t \rightarrow \infty} \frac{1}{t} \log \int_F V_{\pi}(t) d\nu_0(\pi) \leq \sup_{\pi \in F \cap \supp(\nu_0)} W(\pi)
\end{equation}
for any set $F$ with $F \cap \supp(\nu_0) \neq \emptyset$, and
\begin{equation} \label{eqn:lowerbound2}
\liminf_{t \rightarrow \infty} \frac{1}{t} \log \int_G V_{\pi}(t) d\nu_0(\pi) \geq \inf_{\pi \in G \cap \supp(\nu_0)} W(\pi)
\end{equation}
for all open sets $G$ such that $G \cap \supp(\nu_0) \neq \emptyset$. These inequalities and Lemma \ref{lem:Covergrowthrate} imply the LDP.

(iii) Let $\{\Phi_k\}_{k = 1}^{\infty}$ be a countable dense set in the metric space $\left({\mathcal{C}}_0, d\right)$. For each $k$, let $\pi_k$ be a portfolio generated by $\Phi_k$. Consider an initial distribution of the form
\begin{equation} \label{eqn:mynu}
\nu_0 = \sum_{k = 1}^{\infty} \lambda_k \delta_{\pi_k},
\end{equation}
where $\lambda_k > 0$ and $\sum_{k = 1}^{\infty} \lambda_k = 1$.

To see that $\nu_0$ works, let $\pi$ be any functionally generated portfolio and $\Phi \in {\mathcal{C}}_0$ be its generating function. Then there is a sequence $\pi_{k'}$ whose generating functions $\Phi_{k'}$ converges locally uniformly to $\Phi$. By Lemma \ref{lem:ULLN}, we have $W(\pi_{k'}) \rightarrow W(\pi)$. Thus $W^* = \sup_{\pi \in \supp(\nu_0)} W(\pi) = \sup_{\pi \in {\mathcal{FG}}} W(\pi)$. By Lemma \ref{lem:Covergrowthrate}, to establish the asymptotic universality property \eqref{eqn:thmuniversality} it remains to show that
\[
\lim_{t \rightarrow \infty} \frac{1}{t} \log V^*(t) = W^*,
\]
but this is a direct consequence of the uniform convergence property (i).
\end{proof}

\section{Conclusion and further problems} \label{sec:conclusion}
In this paper we studied Cover's portfolio from the point of view of stochastic portfolio theory. Given a family of portfolios, we studied its wealth distribution which is analogous to the capital distribution of an equity market. In this setting, the wealth distribution is not stable and diverse in the sense of stochastic portfolio theory, and under certain conditions we quantified its concentration in terms of large deviation principles. We also extended Cover's portfolio to the nonparametric family of functionally generated portfolios and established its asymptotic universality in the spirit of \cite{J92}.

Similar to \cite{J92} and \cite{GLU06}, the results in this paper are asymptotic in nature, and in this nonparametric setting we are unable to establish quantitative bounds that hold for all finite horizons. It is desirable to obtain quantitative bounds despite of the fact that they may be too conservative to be useful in practice. Even if the underlying market process is modeled correctly, the convergence $\frac{1}{t} \log V_{\pi}(t) \rightarrow W(\pi)$ may take a long time and the portfolio $\widehat{\pi}(t)$ may be dominated by noise. A possible remedy is to use a smaller family or to impose regularization via a suitable prior (initial distribution). Tackling this bias-variance trade-off in dynamic portfolio selection is an interesting problem of great practical importance.

\begin{problem}
For Cover's portfolio for the family of functionally generated portfolios, is it possible to choose an initial distribution such that $\widehat{\pi}$ can be computed or approximated numerically and a quantitative lower bound of $\widehat{V}(t) / V^*(t)$ can be proved?
\end{problem}

A possible direction is to restrict to functionally generated portfolios that are {\it rank-based}, that is, the portfolio weight of a stock depends only on its rank according to capitalization. Equivalently, this means that the generating functions are invariant under relabeling of coordinates. This has the effect of reducing the effective domain of $\pi$ and $\Phi$ to $\frac{1}{n!}$ of the unit simplex $\Delta_n$. By reducing the curse of dimensionality, we may be able to obtain an better bound.

Instead of using Cover's portfolio as a wealth-weighted average, we may use other portfolio selection algorithms to construct universal portfolios for functionally generated portfolios. Perhaps the {\it follow-the-regularized-leader} (FTRL) approach of \cite{HK15} can be generalized to this nonparametric set up. Intuitively, we want to perform a sort of online gradient descent on the set ${\mathcal{FG}}$.

A classic result in asymptotic parametric statistics is the {\it Bernstein von-Mises Theorem} which states that the posterior distribution is asymptotically normal under appropriate scaling \cite[Chapter 10]{V00}. Certain generalizations to nonparametric models are possible, see for example \cite{CN13}. As noted in the Introduction, for constant-weighted portfolios the map $\pi \mapsto V_{\pi}(t)$ is essentially a multiple of a normal density (see \cite{J92} and \cite{CB03}). Hence the wealth distribution, when suitably rescaled, is approximately normal if the initial distribution is sufficiently regular. Since the family of functionally generated portfolios is convex, it can be viewed as an infinite dimensional constant-weighted family of portfolios.

\begin{problem}
Formulate and prove a version of Bernstein von-Mises Theorem in the setting of Theorem \ref{thm:main2}.
\end{problem}

\appendix
\section{}
The following lemmas are both standard results. Since we are unable to find suitable references, we will provide the proofs for completeness.

\begin{lemma} \label{lem:topology}
Let $X$ be a topological space and $Y$ be a subset of $X$ equipped with the subspace topology. If $A \subset Y$, then
\[
\partial_X A \subset \partial_Y A \cup \partial_X Y.
\]
\end{lemma}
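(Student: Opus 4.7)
The plan is to work with the characterization $\partial_X A = \overline{A}^X \cap \overline{X \setminus A}^X$, and analogously for $\partial_Y A$ in the subspace topology. Given $x \in \partial_X A$, I will split into two cases according to whether $x$ lies in $\partial_X Y$ or not; the first case is immediate, so the entire content is in the second case.

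Suppose then that $x \in \partial_X A$ but $x \notin \partial_X Y$. The first step is to observe that $x \in \overline{A}^X \subset \overline{Y}^X$ (since $A \subset Y$), so the failure of $x \in \partial_X Y$ forces $x \notin \overline{X \setminus Y}^X$. This means there is an $X$-open neighborhood $W$ of $x$ with $W \subset Y$; in particular $x \in Y$, which is needed for the conclusion $x \in \partial_Y A$ to make sense.

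The second step is to verify the two closure conditions for $\partial_Y A$. Let $U$ be an arbitrary open neighborhood of $x$ in $Y$; write $U = Y \cap V$ with $V$ open in $X$. Then $V \cap W$ is $X$-open, contains $x$, and sits inside $Y$. Since $x \in \overline{A}^X$ the set $V \cap W$ meets $A$, and the intersection is contained in $U$, so $U \cap A \neq \emptyset$; hence $x \in \overline{A}^Y$. The symmetric argument, using $x \in \overline{X \setminus A}^X$ together with $V \cap W \subset Y$, shows that $V \cap W$ meets $(X \setminus A) \cap Y = Y \setminus A$, so $U$ meets $Y \setminus A$ and $x \in \overline{Y \setminus A}^Y$.

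Combining the two gives $x \in \partial_Y A$, completing the inclusion. There is no serious obstacle here; the only point that requires a bit of care is noting that a neighborhood in $Y$ may be restricted to a neighborhood of the form $V \cap W \subset Y$ by intersecting with the distinguished $X$-open set $W$ witnessing that $x$ is an $X$-interior point of $Y$, which is precisely how the alternative $x \notin \partial_X Y$ gets used.
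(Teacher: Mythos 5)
Your proof is correct, and it is essentially the paper's argument: both hinge on the observation that if $x \in \partial_X A$ but $x \notin \partial_X Y$, then $x$ must have an $X$-open neighborhood $W \subset Y$ (the alternative $W \subset X \setminus Y$ being incompatible with $x \in \overline{A}^X$ and $A \subset Y$), and this $W$ is then used to transfer the boundary property from $X$ to the subspace $Y$. The only difference is presentational — you argue directly via the closure characterization $\partial A = \overline{A} \cap \overline{A^c}$, while the paper runs the same case analysis as a proof by contradiction.
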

\begin{proof}
We will argue by contradiction. Suppose $x \in \partial_X A$ and $x \notin \partial_Y A \cup \partial_X Y$.

By the definition of subspace topology and boundary, there exist neighborhoods $U_1$ and $U_2$ of $x$ in $X$ such that
\[
\text{(1) } U_1 \cap Y \subset A \quad \text{or} \quad \text{(2) } U_1 \cap Y \subset Y \setminus A,
\]
and
\[
\text{(i) } U_2 \subset Y \quad \text{or} \quad \text{(ii) } U_2 \subset X \setminus Y.
\]
We may replace $U_1$ and $U_2$ above by their intersection $U = U_1 \cap U_2$. Also, since $x \in \partial_X A$, $U$ intersects both $A$ and $X \setminus A$. We claim that the above statements are incompatible. We consider the following cases.

(1) and (i): Since $U \subset Y$ and $U \cap Y \subset A$, we have $U \subset A$. This contradicts the fact that $U$ intersects $X \setminus A$.

(2) and (i): We have $U \subset Y \setminus A$. But $A \subset Y$, so $U$ does not intersect $A$  and we have a contradiction.

(ii): If $U \cap Y = \emptyset$, then $U$ does not intersect $A$ which is a contradiction.
\end{proof}

\begin{lemma} \label{lem:weakconvergence}
Suppose ${\Bbb P}_t$ converges weakly to ${\Bbb P}$. Let $f: {\mathcal{S}} \rightarrow {\Bbb R}$ be bounded continuous and let $Y$ be a ${\Bbb P}$-continuity set in ${\mathcal{S}}$ with ${\Bbb P}(Y) > 0$. Then
\[
\lim_{t \rightarrow \infty} \int_Y f d {\Bbb P}_t = \int_Y f d{\Bbb P}.
\]
\end{lemma}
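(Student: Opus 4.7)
The plan is to reduce the claim to the standard Portmanteau theorem applied to the subspace $Y$. First, since $Y$ is a ${\Bbb P}$-continuity set in ${\mathcal{S}}$, the ordinary Portmanteau theorem applied to ${\Bbb P}_t \to {\Bbb P}$ on ${\mathcal{S}}$ gives ${\Bbb P}_t(Y) \to {\Bbb P}(Y) > 0$. So for all sufficiently large $t$ we may form the normalized restrictions $\tilde{{\Bbb P}}_t(\cdot) = {\Bbb P}_t(\cdot \cap Y)/{\Bbb P}_t(Y)$ and $\tilde{{\Bbb P}}(\cdot) = {\Bbb P}(\cdot \cap Y)/{\Bbb P}(Y)$ as Borel probability measures on $Y$ equipped with the subspace topology.

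The key step is to upgrade this to weak convergence $\tilde{{\Bbb P}}_t \to \tilde{{\Bbb P}}$ on $Y$. By Portmanteau again, it suffices to show $\tilde{{\Bbb P}}_t(A) \to \tilde{{\Bbb P}}(A)$ for every $\tilde{{\Bbb P}}$-continuity set $A \subset Y$. Here Lemma \ref{lem:topology} is essential: it yields $\partial_{\mathcal{S}} A \subset \partial_Y A \cup \partial_{\mathcal{S}} Y$, and both pieces on the right have ${\Bbb P}$-measure zero (the first because $A$ is a $\tilde{{\Bbb P}}$-continuity set and $\tilde{{\Bbb P}}$ is proportional to ${\Bbb P}$ on $Y$; the second by the standing hypothesis on $Y$). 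Hence $A$ is itself a ${\Bbb P}$-continuity set in ${\mathcal{S}}$, so the hypothesized weak convergence on ${\mathcal{S}}$ delivers ${\Bbb P}_t(A) \to {\Bbb P}(A)$. Dividing by ${\Bbb P}_t(Y) \to {\Bbb P}(Y) > 0$ produces $\tilde{{\Bbb P}}_t(A) \to \tilde{{\Bbb P}}(A)$, completing the upgrade.

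The conclusion is then routine: the restriction $f|_Y$ is bounded and continuous in the subspace topology of $Y$, so weak convergence on $Y$ gives $\int_Y f \, d\tilde{{\Bbb P}}_t \to \int_Y f \, d\tilde{{\Bbb P}}$, and multiplying through by ${\Bbb P}_t(Y) \to {\Bbb P}(Y)$ yields the claim $\int_Y f \, d{\Bbb P}_t \to \int_Y f \, d{\Bbb P}$. The only nontrivial piece of this plan is the transfer of the continuity-set property from the subspace $Y$ to the ambient space ${\mathcal{S}}$, which is exactly what Lemma \ref{lem:topology} was set up to provide; everything else is bookkeeping on top of the standard Portmanteau theorem.
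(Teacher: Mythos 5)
Your proposal is correct and follows essentially the same route as the paper's own proof: conditioning on $Y$, using Lemma \ref{lem:topology} to transfer the continuity-set property from the subspace to ${\mathcal{S}}$, and concluding via the Portmanteau theorem. No substantive differences.
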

\begin{proof}
Consider the measures conditioned on $Y$:
\[
\widetilde{{\Bbb P}}_t(\cdot) = \frac{{\Bbb P}_t(\cdot \cap Y)}{{\Bbb P}_t(Y)}, \quad \widetilde{{\Bbb P}}(\cdot) = \frac{{\Bbb P}(\cdot \cap Y)}{{\Bbb P}(Y)}.
\]
Since ${\Bbb P}_t(Y) \rightarrow {\Bbb P}(Y) > 0$ as $A$ is a ${\Bbb P}$-continuity set, the measures $\widetilde{{\Bbb P}}_t$ are well defined for $t$ sufficiently large.

We claim that $\widetilde{{\Bbb P}}_t$ converges weakly to $\widetilde{\Bbb P}$. This implies the statement because $f$ is bounded continuous on $Y$ and
\[
\int_{{\mathcal{S}}} f d \widetilde{{\Bbb P}}_t = \frac{1}{{\Bbb P}_t(Y)} \int_Y f d {\Bbb P}_t \rightarrow \frac{1}{{\Bbb P}(Y)} \int_Y f d{\Bbb P} = \int_{{\mathcal{S}}} f d \widetilde{{\Bbb P}}.
\]

To prove the claim, it suffices by the Portmanteau theorem to show that $\widetilde{{\Bbb P}}_t(A) \rightarrow \widetilde{{\Bbb P}}(A)$ for all $A \subset Y$ with $\widetilde{{\Bbb P}}(\partial_Y A) = \frac{1}{{\Bbb P}(Y)} {\Bbb P} \left(\partial_Y A \cap Y\right) = 0$. Note that $\partial_Y A \subset Y$, so ${\Bbb P}\left(\partial_Y A\right) = 0$. By Lemma \ref{lem:topology}, we have $\partial_{{\mathcal{S}}} A \subset \partial_Y A \cup \partial_{{\mathcal{S}}} Y$, and so ${\Bbb P} \left(\partial_{{\mathcal{S}}} A\right) = 0$ as $Y$ is a ${\Bbb P}$-continuity set. Thus $A = A \cap Y$ is a ${\Bbb P}$-continuity set and we have ${\Bbb P}_t(A) \rightarrow {\Bbb P}(A)$. This completes the proof of the lemma.
\end{proof}

\section*{Acknowledgment}
The author thanks Soumik Pal for his suggestion to consider a market portfolio of portfolios and large deviations. Part of this research was done when the author was visiting UCSB in Spring 2015. He thanks the Department of Statistics and Applied Probability for its hospitality and Tomoyuki Ichiba for many helpful discussions. Some preliminary results of the paper were presented at the conference `Stochastic Portfolio Theory and related topics' at Columbia University in May 2015. He thanks the participants for their comments and suggestions. The author also thanks the anonymous referees and the editors for valuable comments. This research is supported partially by NSF grant DMS 1308340.

\bibliographystyle{plain}
\bibliography{infogeo}

\end{document}